\newcommand{\old}[1]{}
\renewcommand{\emph}[1]{\textit{#1}}
\newsavebox{\mycases}
\def\black{\color{black}}
\newcounter{rot}
\newcommand{\ignore}[1]{}
\def\ii_(#1,#2){i_{#1}^{#2}}
\def\a{\alpha}
\def\b{\beta}
\def\d{\delta}
\def\f{\phi}
\def\F{\Phi}
\def\Th{\Theta}
\def\l{\lambda}
\def\p{\pi}
\def\r{\rho}
\def\s{\sigma}
\def\Om{\Omega}
\def\x{\xi}
\newcommand{\ol}[1]{\overline{#1}}
\newcommand{\ooi}{(1+o(1))}
\newcommand{\brac}[1]{\left( #1 \right)}
\renewcommand{\Pr}{{\bf Pr}}
\newcommand\bfrac[2]{\left(\frac{#1}{#2}\right)}
\newcommand{\nospace}[1]{}
\newcommand{\beq}[1]{\begin{equation}\label{#1}}
\def\eeq{\end{equation}}
\def\es{\emptyset}
\setlist[itemize,1]{leftmargin=2cm,labelsep=1cm,itemsep=20pt,topsep=10pt}
\setlist[enumerate,1]{topsep=5pt,itemsep=0pt,label=(\alph*)}
\newtheorem{theorem}{Theorem}
\newtheorem{lemma}[theorem]{Lemma}
\theoremstyle{remark}
\newtheorem{remark}{Remark}
\newcommand{\onept}{\hspace*{1pt}}
\newcommand{\E}{\mathbf{E}}
\let\epsilon=\varepsilon
\renewcommand{\thesubfigure}{(\roman{subfigure})}
\tikzset{every picture/.style={line width=0.8pt}}
\tikzset{empty/.style={rectangle,draw=none,fill=none}}
\tikzset{bnode/.style={circle,draw,inner sep=0pt,fill=white,minimum size=1.5mm}}
\tikzset{rnode/.style={circle,draw,inner sep=0pt,fill=black,minimum size=1.5mm}}
\begin{document}
\makeatletter
\title{
Discordant voting processes on finite graphs.\thanks{
This work was supported by EPSRC grant EP/M005038/1,
``Randomized algorithms for computer networks'', NSF grant DMS0753472  and Becas CHILE.
}
   }
\author{
Colin Cooper\thanks{Department of Informatics, King's College London, UK.
{\tt colin.cooper@kcl.ac.uk}}
\and Martin Dyer\thanks{School of Computing, University of Leeds, Leeds, UK.
{\tt M.E.Dyer@leeds.ac.uk }}
\and Alan
Frieze\thanks{Department of Mathematical Sciences, Carnegie Mellon
University, Pittsburgh PA 15213, USA. {\tt alan@random.math.cmu.edu }}
\and Nicol\'as Rivera\thanks{Department of Informatics, King's College London, UK.
{\tt nicolas.rivera@kcl.ac.uk}}
}
\maketitle \makeatother

\begin{abstract}
We consider an asynchronous  voting process on graphs  called discordant voting, which can be described as follows. Initially each vertex holds one of two opinions, red or blue.  Neighbouring vertices with different opinions
interact pairwise along an edge.
After an interaction both vertices have the same colour. The quantity of interest is the time to reach consensus, i.e. the number of steps needed for all vertices have the same colour.
We show that for a given initial colouring of the vertices, the expected time to reach consensus, depends strongly on the underlying graph and the update rule (push, pull, oblivious).

\ignore{
For oblivious voting on connected $n$-vertex graphs  and starting from an initial half red, half blue colouring, the expected time to consensus is $n^2/4$ independent of the underlying graph.

Let  $\E T$ be the expected time to finish voting.
For the push and pull protocols and the half--half initial colouring we have the following results.
For the complete graph $K_n$, the push protocol has $\E T= \Theta (n \log n)$, whereas the pull protocol has $\E T=\Theta(2^n)$. For the
cycle $C_n$ all three protocols have $\E T= \Theta(n^2)$. For the star graph however, the pull protocol has  $\E T=O(n^2)$, whereas the push protocol is slower with $\E T = \Theta(n^2 \log n)$. For the double star (two stars joined at the central vertices), for push we have  $\E T=\Om(2^{n/5})$ and for pull $\E T=O(n^4)$. Finally, for the barbell graph (two cliques of equal size joined by an edge) both push and pull have $\E T = \Om(2^{n/10})$.

The wide variation in $\E T$ for the pull protocol is to be contrasted with the well known model of synchronous pull voting, for which $\E T=O(n^3)$ on any connected graph, and $\E T = O(n)$ on many classes of expanders.
}

\end{abstract}


\section{Introduction}

The  process of reaching  consensus in a graph by means of local interactions is
known as voting. It  is an abstraction of human behavior, and can be implemented in distributed computer networks.
As a consequence  voting processes  have been widely studied.

In the simplest case each vertex has a colour (e.g. red, blue etc), and  neighbouring vertices
interact pairwise in a fixed way to update their colours. After this  interaction both vertices  have the same colour.
In randomized  voting,  three basic ways to make an update  are:
\begin{quote}
{\em Push:} Pick a random  vertex and push its colour to a random  neighbour.\\
{\em Pull:} Pick a random  vertex and pull the colour of a random  neighbour.\\
{\em Oblivious:} Pick a  random edge and push the colour of one randomly chosen endpoint to the other one.
\end{quote}
In the case of  asynchronous voting, all three methods are well defined. For synchronous voting
the push and oblivious processes are not well defined, as more than one colour could be pushed to a vertex at a given step.

A common discrete voting model is {\em randomized synchronous pull voting}.
In this model, at each step, each vertex synchronously adopts the opinion of a randomly chosen neighbour.
The  model has been extensively studied.
Hassin and Peleg~\cite{HassinPeleg-InfComp2001}
and Nakata { et al.}~\cite{Nakata_etal_1999}
proved that on  connected non-bipartite graphs the probability a given opinion $A$ wins is $d(A)/2m$ where $d(A)$ is the degree of the vertices initially holding opinion $A$, and $m$ is the number of edges.
For the time to consensus,
if the colours of the vertices are all initially distinct,
 the process takes $\Th(n)$ expected steps to reach consensus
on many classes of expander graphs on $n$ vertices.
This is proved for the complete graph  $K_n$ by Aldous and Fill  \cite{AlFi}), and for   $r$-regular random graphs  by Cooper, Frieze and Radzik \cite{CFR}. Results for general graphs based on the eigenvalue gap and variance of the degree sequence are given by Cooper et al. in \cite{CEHR-SIAM2013}.
They
 find an expected consensus time of $O(n/(\nu(1-\lambda)))$, where $n$ is the number of vertices of $G$, and $\l$ is the absolute value of the second eigenvalue of the transition matrix. The parameter $\nu$ measures the regularity of the degree sequence, and ranges from $1$ for regular graphs to $\Theta(n)$ for the star graph.
It is given by $\nu= \sum_{v\in V} d^2(v)/(d^2n)$, $d(v)$ is the degree of vertex $v$, and $d=d_{\text{ave}}=2m/n$ is the average degree.
For regular graphs, the result of \cite{CEHR-SIAM2013} achieves an upper bound of $O(n^3)$ in the worst case.  Using a different approach, Berenbrink et al. \cite{BGKT}  proved a consensus time of $O((d_{\text{ave}}/d_{\min})(n/\Phi))$. Here $d_{\text{ave}}, \;d_{\min}$ are the average and minimum degrees respectively. $\Phi$ is the graph conductance,
 $\Phi = \min_{S \subset V(G)} \frac{E(S:S^c)}{\min\{d(S),d(S^c)\}}$, where
 $E(S:S^c)$ are the edges between $S$ and $S^c$, and $S \ne \es,V$.

Much of the analysis of {\em asynchronous pull voting} has been made in the continuous-time model, where edges or vertices have exponential waiting times between events. An example is the  work by Cox \cite{Cox}  for toroidal grids. For detailed coverage see Liggett \cite{Lig}. More recently Oliveira \cite{Oliv} shows that the expected consensus time is $O(H_{\max})$, where
$H_{\max} = \max_{v,u \in V} H(v,u)$ and $H(v,u)$ is the expected hitting time of $u$ by a random walk starting at vertex $v$.
Asynchronous pull voting is less studied in a discrete setting. It was shown in \cite{CR} that the expected time to consensus for asynchronous pull voting is
\begin{equation}\label{Pull1}
\E T =O(n m/d_{\min}\Phi),
\end{equation}
where $m$ is the number of edges, $d_{\min}$ is minimum degree and $\Phi$ is graph conductance. Thus $\E T= O(n^5)$ for any connected graph, and
$O(n^2)$ for  regular expanders.

In this paper we consider a different asynchronous voting process, {\em discordant voting},  which can be described as follows. Initially each vertex holds one of two opinions, red or blue.  Neighbouring vertices of different colours, interact pairwise along a discordant edge. We reserve the term {asynchronous voting} for the ordinary case discussed previously.
This paper is a fundamental study of the expected time to consensus in discordant voting.
We find the  performance of the discordant voting process varies considerably both with the structure of the underlying graph, and the protocol used (push, pull, oblivious) and sometimes  in a quite counter-intuitive way (see Table \ref{Compare-results}). This behavior  is in stark contrast to that of the ordinary asynchronous case.

Discordant voting  originated in the complex networks community as a model of social evolution (see e.g. \cite{Sayama}, \cite{Sayama1}).
 The general version of the  model allows for {\em rewiring}. The interacting vertices can break  edges joining them and reconnect elsewhere. This serves as a model of social behavior  in which vertices  either change their opinion or their friends.

Holme and Newman \cite{HN} investigated discordant  voting  as
a model of  a self-organizing network which restructures based on the acceptance or rejection of differing opinions among social groups.
At each step, a random discordant edge $ uv$ is selected, and an endpoint $x \in \{u,v\}$ chosen with probability $1/2$.
With probability $1-\a$ the opinion of $x$ is pushed to the other endpoint $y$, and with probability $\a$, vertex $y$ breaks the edge and rewires to a random vertex with the same opinion as itself. Simulations suggest the existence of threshold behavior in $\a$. This was investigated further by
Durrett et al. \cite{Durrett}
for sparse random graphs of constant average degree 4 (i.e. $G(n,4/n)$).
The paper studies two rewiring strategies, rewire-to-random, and rewire-to-same, and finds experimental evidence of a phase transition  in both cases.
Basu and Sly \cite{BS} made a formal analysis of rewiring for Erdos-Renyi graphs $G(n,1/2)$ with $1-\a=\b/n$, $\b>0$ constant.
They found that for either strategy, if $\b$ is sufficiently small the network quickly disconnects maintaining the initial proportions. As $\b$ increases
the minority proportion decreases,  and in rewire-to-random a positive fraction of both opinions survive. A subsequent paper by Durrett et al. \cite{Durrett2} examines the rewiring phase transitions for the intermediate case of thick graphs $G(n,1/n^a)$ where $0<a<1$.

Although discordant voting seems a natural model of local interaction, its behavior is not well understood even in the simplest cases. Moreover, the analysis of rewiring is highly problematic. Firstly  there is no natural model for the space of random graphs derived from the rewiring. Secondly the voting and rewiring interactions condition the degree sequence in a way which makes subsequent analysis difficult.

In this paper we assume there is no rewiring, and evaluate the performance of discordant voting as a function of the graph structure.
Discordant voting always chooses an edge  between the opposing red and blue sets, so intuitively it should finish faster than  ordinary  asynchronous voting  which ignores this discordancy information.

Perhaps surprisingly,
for discordant voting using the oblivious protocol, the expected time to
consensus  is the same for any connected $n$--vertex graph. It is independent of graph structure and of the number of edges, and depends only on
 the initial number of vertices of each colour (red, blue).
Whichever discordant edge is chosen,  the number of blue vertices in the graph increases (resp. decreases)  by one with probability $1/2$ at each step.
This is  equivalent to an unbiased   random walk on the line $(0,1,...,n)$ with absorbing barriers (see Feller \cite[XIV.3]{Feller}).

\begin{remark}\label{Obliv}
{\em Oblivious protocol.}
Let $T$ be the time to consensus in the two-party asynchronous discordant voting
process starting from any initial coloring with $R(0)=r, B(0) =n-r$ red and blue vertices respectively.
  For any connected $n$ vertex graph, $\E T(\text{Oblivious})= r(n-r)$.
\end{remark}
Starting with an  equal number of red and blue vertices  the oblivious protocol takes  $\E T \sim n^2/4$ steps for any connected graph. For ordinary asynchronous voting, the performance of the oblivious protocol can also depend on the number of edges $m$. In the worst case expected wait to hit the last red-blue edge is $m$,  so the
ordinary case  takes $\E T =O(m n^2)$ steps.

In  contrast to the oblivious case, discordant  push and pull protocols can exhibit very different expected times to consensus, which depend strongly on the underlying graph in question.

\begin{theorem}
 \label{Kn}
Let $T$ be the time to consensus of the asynchronous discordant voting
process starting from any initial coloring with an equal number of red and blue vertices $R=B=n/2$.
For the complete graph $K_n$,
$\E T(\text{Push}) = \Th ( n \log n)$,  and $\E T(\text{Pull})= \Th(2^n)$.
\end{theorem}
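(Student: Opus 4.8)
The plan is to track $B_t$, the number of blue vertices, which on $K_n$ is a birth--death chain on $\{0,1,\dots,n\}$ with absorbing endpoints $0,n$ and no holding, since every discordant interaction changes $B_t$ by exactly $\pm1$ (pre-consensus, the uniformly chosen active vertex always has a discordant neighbour). The crucial point is that the active vertex is chosen \emph{uniformly} among all $n$ vertices, so it is blue with probability $B_t/n$. Under push the active vertex imposes its colour, so a blue active vertex raises the count: $\Pr[B_{t+1}=B+1\mid B_t=B]=B/n$ and $\Pr[B_{t+1}=B-1\mid B_t=B]=(n-B)/n$. Under pull the active vertex adopts a discordant neighbour's colour, reversing the bias: up-probability $(n-B)/n$, down-probability $B/n$. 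Hence push drifts away from the balanced state (the majority grows) while pull drifts toward it (the minority grows). We start at $B_0=n/2$ and write $\tau$ for the absorption time.

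\textbf{Push.} Set $m_t=B_t-n/2$, so $\E[\Delta m\mid m]=2m/n$ and $\E[(\Delta m)^2\mid m]=1$. A short computation shows $N_t=(1+4/n)^{-t}(m_t^2+n/4)$ is a martingale, using $(1+4/n)(n/4)=n/4+1$. Optional stopping at $\tau$, where $m_\tau^2=n^2/4$, gives $\E[(1+4/n)^{-\tau}]=1/(n+1)$, and convexity (Jensen) yields $(1+4/n)^{\E\tau}\ge n+1$, i.e. $\E\tau\ge\log(n+1)/\log(1+4/n)=\Om(n\log n)$. For the matching upper bound I would exploit the same recursion $\E[m_{t+1}^2]=m_t^2(1+4/n)+1$: the distance from the centre grows geometrically at rate $1+\Th(1/n)$, so once $|m_t|=\Th(\sqrt n)$ it reaches the boundary within $O(n\log n)$ further steps, while the initial diffusion out of the flat centre to scale $\sqrt n$ costs only $O(n)$. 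Equivalently one solves the recurrence exactly: the ratio products are $\gamma_i=\prod_{j\le i}(q_j/p_j)=\binom{n-1}{i}$, and estimating the resulting sums gives $\E\tau=\Th(n\log n)$.

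\textbf{Pull.} The chain is reversible with measure $\mu_B=\binom nB$, concentrated at the centre with total mass $2^n$; the edge conductances are $c_{B,B+1}=\mu_B(n-B)/n=\binom{n-1}{B}$, summing to $2^{n-1}$. Viewing the chain as an electrical network, the effective resistance from $n/2$ to a boundary is $R_{\mathrm{eff}}=\sum_{B<n/2}1/\binom{n-1}{B}=\Th(1)$ (dominated by the $B=0$ term), so the commute time is $2\cdot2^{n-1}\cdot\Th(1)=\Th(2^n)$; since $\E\tau\le\E_{n/2}H_0\le$ this commute time, we get $\E\tau=O(2^n)$. For the lower bound I would use a bottleneck argument: essentially all of the mass $2^n$ sits in the central band while the conductances at the two boundaries are $O(1)$, so the expected number of returns to the centre before escaping, $1/p_{\mathrm{esc}}$, is $\Om(2^n)$, and each visit costs at least one step, giving $\E\tau=\Om(2^n)$.

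The two genuinely delicate steps are the push upper bound and the pull lower bound. For push the drift vanishes exactly at the starting point $n/2$, so no single global Lyapunov function immediately yields $O(n\log n)$; one must separately control the slow near-balanced diffusion and the fast exponential escape, or carry the exact birth--death summation through. For pull the relevant quantity is $\min(H_0,H_n)$ rather than a single hitting time, so the clean hitting-time formula (whose $B=1$ term alone is $\Om(2^n)$) only lower-bounds $\E_{n/2}H_0$, which is the wrong direction; making the bottleneck/escape-probability estimate rigorous, with the base-$2$ exponent matching the $O(2^n)$ upper bound, is the crux of the argument.
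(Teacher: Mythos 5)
Your reduction is the same as the paper's up to one pivotal move that you miss: because the transition probabilities are symmetric under $B\mapsto n-B$, the folded process $Y_t=|B_t-n/2|=\max(R_t,B_t)-n/2$ is itself a birth--death chain on $\{0,\dots,N\}$, $N=n/2$, reflecting at $0$ and absorbed at $N$ exactly at time $\min(H_0,H_n)$. The paper works with this folded chain throughout, and it dissolves what you call the crux of the pull lower bound: the ``clean hitting-time formula'' is not the wrong direction at all, since for the folded chain $\E\tau=\E_0T_N\ge\E_{N-1}T_N$, and applying \eqref{eq:localExpHit} with $\pi(k)\propto\binom{n}{N+k}$ gives $\E_{N-1}T_N=\sum_{k=0}^{N-1}\binom{n}{N+k}=\frac12\bigl(2^n-2+\binom{n}{N}\bigr)=\Om(2^n)$ in one line (this is \eqref{Knfrom1}). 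Your substitute bottleneck argument is, as stated, quantitatively off: starting from the centre, the per-visit escape probability is $p_{\mathrm{esc}}=\Th\bigl(1/(c(n/2)\thsp R_{\mathrm{eff}})\bigr)$ with $c(n/2)=\binom{n-1}{n/2-1}+\binom{n-1}{n/2}=\binom{n}{n/2}=\Th(2^n/\sqrt{n})$ and $R_{\mathrm{eff}}=\Th(1)$, so $1/p_{\mathrm{esc}}=\Th(2^n/\sqrt{n})$, not $\Om(2^n)$; charging ``at least one step'' per visit therefore loses a $\sqrt{n}$ factor, and repairing it would require showing the mean excursion length from the centre is $\Th(\sqrt{n})$ (true, via $1/\pi(n/2)$, but unproved in your sketch). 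So the pull lower bound is a genuine gap in your write-up, though fixable either way.

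The rest is sound, and partly nicer than the paper. Your push lower bound via the exponential martingale $(1+4/n)^{-t}(m_t^2+n/4)$, optional stopping (the martingale is bounded by $n^2/4+n/4$, so stopping is legitimate) and Jensen is complete, correct, and genuinely different from the paper's Lemma~\ref{qwik}, which instead expands $\prod_j(1-j/N)/(1+j/N)$ and needs the technical window $\sqrt{N}\le M=o(N^{3/4})$; your route is cleaner. Your pull upper bound via the weighted commute-time identity (conductances $\binom{n-1}{B}$, total conductance $2^{n-1}$, $R_{\mathrm{eff}}=\Th(1)$ dominated by the extreme edge) is a valid alternative to the paper's exact hitting sums combined with Sury's identity $\sum_{k}1/\binom{n}{N+k}=O(1)$ --- the two are essentially the same computation in different languages --- though you should fold first (or make the state $n$ reflecting) so that $\E_{n/2}H_0$ is even finite in the presence of the second absorbing barrier; the dangling modification does not change $R_{\mathrm{eff}}$ on a path network, so this is cosmetic. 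Finally, your push upper bound is only a sketch, but the route you name --- carrying the exact birth--death summation through, with ratio products $\prod_{j\le i}(q_j/p_j)=\binom{n-1}{i}$ --- is precisely the paper's Lemma~\ref{slow}: bound $q_{k+1}/p_{k+1}<1$ by a geometric series, use $1/(p_k-q_k)=N/k$, and sum the harmonic series to get $\E_0T_N\le 2N\log N+O(1)=O(n\log n)$; so that incompleteness is benign.
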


Thus for the complete graph $K_n$
the different protocols give  very different expected completion times, which vary from  $\Th ( n \log n)$ for push, to $\Th(n^2)$ for oblivious, to  $\Th(2^n)$ for pull. On the basis of this evidence, our initial view was that  there should be a meta-theorem of the \lq push is faster than oblivious, oblivious is faster than pull' type.  Intuitively, this is supported by the following argument. Suppose  red ($R$) is the larger colour class.
 Choosing a  discordant vertex uniformly at random, favors the selection of the
 larger class. In the push process, red vertices push their opinion more often, which tends
 to increase the size of $R$.
  Conversely, the pull process  tends to re-balance the set sizes. If $R$ is larger, it is recoloured more often.

For the cycle $C_n$, we prove that all three protocols have similar expected time to consensus; a result which is consistent with the above meta-theorem.
\begin{theorem}
 \label{Cn}
Let $T$ be the time to consensus of the asynchronous discordant voting
process starting from any initial coloring with an equal number of red and blue vertices $R=B=n/2$.
For the cycle $C_n$, the Push, Pull and Oblivious protocols have 
$\E T=\Th(n^2)$.
\end{theorem}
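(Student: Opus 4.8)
The oblivious case is already settled: Remark~\ref{Obliv} with $r=n/2$ gives $\E T(\text{Oblivious})=n^2/4=\Th(n^2)$, so only Push and Pull remain. The plan is to exploit the one-dimensional geometry of $C_n$. In any colouring the vertices split into maximal monochromatic arcs, and the discordant edges are exactly the $2\ell$ boundaries between consecutive arcs, where $\ell$ is the common number of red and blue arcs. I would track these boundaries as particles (``walls'') on $\mathbb{Z}_n$. The key observation is that a single wall between a long red arc and a long blue arc is \emph{unbiased} under both protocols: at such a boundary exactly one red and one blue vertex are active, and selecting the red one versus the blue one moves the wall one step right versus one step left; for Push the chosen vertex recolours its neighbour and for Pull it recolours itself, but in both cases the two directions are symmetric. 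Hence, away from configurations where two walls are adjacent, every wall performs a symmetric nearest-neighbour walk, two walls \emph{annihilate} when they meet (a singleton arc is swallowed), and consensus is exactly the event that all walls have annihilated. Equivalently the number of red vertices $R(t)$ moves by $\pm1$ at every step, with consensus at $R\in\{0,n\}$. For the two-arc initial colouring $R(t)$ is unbiased except at the extreme states $R\in\{1,n-1\}$ (the only states with a singleton), which already yields $\E T=\Th(n^2)$ with the sharp constant $n^2/4$ up to lower-order boundary corrections, and pins down what the general bound should be.

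For the upper bound I would prove $\E T=O(n^2)$ for an arbitrary balanced colouring by analysing this annihilating-walk system. The cleanest route is to couple it with coalescing random walks on $\mathbb{Z}_n$: under the natural coupling the occupied-boundary set of the annihilating system is dominated, site by site, by that of a coalescing system started from the same walls, so annihilating fixation occurs no later than coalescence. On the cycle the meeting time of two symmetric walks is $\Th(n^2)$, and a standard ``last pair to meet'' argument upgrades this to $O(n^2)$ for complete coalescence of up to $n$ walkers; since the worst case for us is \emph{few} walls (fewest annihilation channels), two unbiased walls meeting in $O(n^2)$ steps is the dominant cost. The one point needing care is the contact rule. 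At a singleton, Pull recolours the isolated vertex and tends to annihilate the arc, whereas Push enlarges it, acting as a soft reflection at gap $1$; in either case the bounding walls still meet infinitely often, so the minority arc length behaves as an unbiased walk reflected at $1$, and a short excursion estimate shows this only inflates the annihilation time by a constant factor, leaving $O(n^2)$ intact.

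For the lower bound $\E T=\Om(n^2)$ I would use that $R(t)$ has unit increments and that on $C_n$ the only source of bias is an imbalance between the numbers of red and blue singletons, so the walk is diffusive rather than ballistic. Since $|\Delta R_t|=1$, the number of steps equals the quadratic variation, $\E[T]=\E\big[\sum_{t<T}(\Delta R_t)^2\big]$; were $R$ a martingale this would equal $\E[(R_T-R_0)^2]=n^2/4$, giving $\E T=n^2/4$ exactly. The task is therefore to show that the genuine, singleton-induced drift perturbs this by only $o(n^2)$, which follows from a coarsening estimate bounding the expected number of steps at which singletons of \emph{both} colours are simultaneously present (short arcs are transient). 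This gives $\E T\ge n^2/4-o(n^2)=\Om(n^2)$ for every balanced colouring. A softer but fully rigorous variant avoids the drift bookkeeping entirely: consensus is reached only through a final phase with exactly two walls, entered at a gap that with constant probability is $\Th(n)$; that gap then performs an unbiased walk absorbed at $\{0,n\}$, which needs $\Th(n^2)$ steps.

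The main obstacle is reconciling the two protocols through a single wall picture while controlling their different behaviour at contact. Unlike on $K_n$ (Theorem~\ref{Kn}), where Push carries a strong consensus drift and Pull a strong anti-consensus drift, producing the $\Th(n\log n)$ versus $\Th(2^n)$ dichotomy, on $C_n$ these biases are confined to isolated singletons and are negligible in bulk; quantifying ``negligible'' rigorously---the drift control in the lower bound and the constant-factor loss from the soft reflection in the upper bound---is where the real work lies. The second delicate point is the annihilating-to-coalescing domination on a finite cycle under the active-vertex selection rule: because a single active vertex is selected each step, walls move one at a time and do not move independently once they are close, so making the meeting-time and coupling estimates robust to this local coupling is the technical crux.
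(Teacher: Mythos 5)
Your oblivious case matches the paper (Remark~\ref{Obliv}), and your wall picture is sound as far as it goes: away from singletons, each discordant edge does move $\pm1$ symmetrically under both push and pull, which is exactly why the paper's two-run lower-bound configuration evolves as a symmetric simple random walk. But your proposal takes a genuinely different route from the paper for the upper bound, and both of your main steps have real gaps. For the upper bound, the paper does not use annihilating/coalescing walks at all: it introduces the concave potential $\psi(X)=\sum_i\sqrt{\ell_i}$ over run lengths, proves by a three-case analysis (both endpoints non-singleton, one singleton, two singletons) that every discordant edge contributes expected decrease $\delta_{uv}\le-\frac15(\ell_u^{-3/2}+\ell_v^{-3/2})$, converts this via Jensen into the drift bound $\E[\psi(X_{t+1})]\le\psi(X_t)-k^3/(40\,\psi(X_t)^3)$, and then applies optional stopping phase-by-phase (phases indexed by the run count $k=2r$). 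Crucially, the naive summation at that point yields only $20n^2(\ln r_0+1)=O(n^2\log n)$; the paper removes the logarithm by a further linear-programming duality argument exploiting $\sum_{j\le r}x_j\le\sqrt{2rn}$. Your coalescing-walk plan faces the same logarithm: a per-phase union bound gives $\sum_r n^2/r=O(n^2\log n)$, and to do better you need a quantitative coarsening estimate (roughly $k_t\lesssim n/\sqrt t$, so that the total number of wall moves integrates to $O(n^2)$). That estimate is not an off-the-shelf citation here, because your walls do not move independently: exactly one wall moves per step, selected through the discordant-\emph{vertex} distribution in which singletons carry double weight, and the contact rules differ between push and pull (under pull a selected singleton annihilates \emph{two} walls at once). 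You flag this as the technical crux but do not resolve it, and it is precisely where the paper had to do its real work.

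For the lower bound there is a sharper gap. Your quadratic-variation identity is circular on its own ($\sum_{t<T}(\Delta R_t)^2=T$ exactly, since every step flips one vertex), so all content lies in showing the singleton-induced drift contributes $o(n^2)$; your claimed coarsening bound on steps with both-colour singletons present is asserted, not proved, and from near-alternating initial configurations the per-step drift $(s_B-s_R)/\kappa$ is not obviously negligible in aggregate. Your softer variant needs the unproven claim that the two-wall phase is entered at a gap of order $n$ with constant probability; if instead it is typically entered at gap $O(1)$, the final phase costs only $O(n)$. The paper sidesteps all of this: its lower bound is proved only for the specific two-run balanced colouring ($\ell_1=\ell_2=n/2$), where the red-run length is exactly a symmetric simple random walk absorbed when it reaches $1$ or $n-1$, and Feller gives $(\nu-1)^2=\Omega(n^2)$ directly. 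So the theorem's $\Omega(n^2)$ should be read as witnessed by that initial configuration, not claimed uniformly over all balanced colourings; if you restrict your lower bound to the same configuration, your argument collapses to the paper's and is complete, but your stronger ``for every balanced colouring'' claim is unsupported as written.
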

At this point we are left with a  choice. Either to produce evidence for a  relationship of the form   $\E T(\text{Push}) =O( \E T(\text{Pull}))$ for general graphs, or to refute it. Mossel and  Roch \cite{Mossel} found slow convergence of the iterated prisoners dilemma problem (IPD) on caterpillar trees. Intuitively push voting is aggressive, whereas pull voting is altruistic, and thus similar to cooperation in the IPD. Motivated by this, we found simple counter examples, namely the star graph $S_n$ and  the double star $S_n^*$.

\begin{theorem}
 \label{Sn}
Let $T$ be the time to consensus in the two-party asynchronous discordant voting
process starting from any initial coloring with an equal number of red and blue vertices $R=B=n/2$.

For the star graph $S_n$,
$\E T(\text{Push}) = \Th ( n^2 \log n)$,  and $\E T(\text{Pull})= O(n^2)$.

For the double star $S^*_n$ with the initial colouring of Figure \ref{double:fig01},
$\E T(\text{Push}) = \Om(2^{n/5})$,  and $\E T(\text{Pull})= O(n^4)$.
\end{theorem}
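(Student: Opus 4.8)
The plan is to project each graph onto a few macroscopic coordinates and read the consensus time off the resulting low-dimensional walk, using the fact (implicit in Theorem~\ref{Kn}) that push makes the acting vertex spread its colour, so that on $K_n$ the larger colour class grows, while pull does the reverse. For the star $S_n$ with centre $c$, every discordant edge joins $c$ to an oppositely coloured leaf, and each interaction recolours exactly one endpoint: either the centre (a \emph{centre flip}, leaving the number $b$ of blue leaves unchanged) or a leaf (changing $b$ by one). From centre-colour red with $b\ge 1$ blue leaves, under push the centre flips with probability $b/(b+1)$ and $b$ decreases with probability $1/(b+1)$; under pull these are interchanged; centre-colour blue is symmetric with $b\mapsto n-1-b$.

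For \textbf{push} the centre is the fast coordinate, flipping on all but a $\Th(1/b)$ fraction of steps, so between consecutive leaf recolourings it relaxes to its quasi-stationary law, and at those recolouring epochs $b$ performs the biased walk $\Pr(b\to b+1)=b/(n-1)$. This is precisely the majority walk of push voting on $K_{n-1}$, so by Theorem~\ref{Kn} there are $\Th(n\log n)$ epochs. One epoch at leaf-count $b$ costs a geometric number of steps with mean $\Th(1+b(n-1-b)/n)$, i.e.\ $\Th(n)$ across the central band and $\Th(1)$ near the ends; weighting the (repelling) walk's visit counts by this cost makes the central band dominate and gives $\E T(\text{Push})=\Th(n^2\log n)$. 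For \textbf{pull} the roles swap: $b$ is ballistic, moving $\Th(n)$ steps toward the monochromatic end matching the current centre colour, while the centre is slow. The run decomposes into sweeps; a sweep from centre-colour red reaches all-red consensus before the centre flips with probability $\Th(1/b)=\Om(1/n)$, else the centre flips and a reverse sweep starts. Thus $O(n)$ sweeps of $O(n)$ steps give $\E T(\text{Pull})=O(n^2)$.

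The double star $S^*_n$, with centres $c_1,c_2$ joined by a bridge and leaf-blocks of blue-counts $b_1,b_2$, is the crux. Localising the single-star computation to each block, \emph{push repels each block's leaf-imbalance away from balance toward a monochromatic end}, whereas \emph{pull attracts each block into agreement with its own centre}; consensus forces both blocks to the same colour. For the \textbf{push lower bound}, starting from the split colouring of Figure~\ref{double:fig01} the process must drag one whole block across its balanced point against the repelling drift; I would track that block's signed leaf-count as a walk with a restoring drift over a window of width $\Th(n)$ and run a bottleneck argument, choosing a set $A$ that contains the initial configuration, excludes consensus, and whose boundary is crossable only by a $\Th(n)$-length against-the-drift excursion of probability $2^{-\Th(n)}$, so that a supermartingale/return-time bound yields $\E T(\text{Push})=\Om(2^{n/5})$, the exponent being the barrier width set by the figure. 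For the \textbf{pull upper bound}, pull drives each block into agreement with its centre in $O(n^2)$ steps as in the single star, after which the two centre-colours negotiate through the bridge; bounding this as a hitting-time problem for the centre pair, with $O(n^2)$ centre-updates each costing $O(n^2)$ steps because centre events are rare relative to leaf events, gives $\E T(\text{Pull})=O(n^4)$.

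The hard part is the exponential lower bound for push on $S^*_n$. One must choose the bottleneck set and restoring potential so that the width-$\Th(n)$ barrier is genuinely unavoidable despite the cheap centre flips (which toggle a block's centre but not its leaf majority), and must control the coupling between $b_1$, $b_2$ and the bridge so that hard-won progress in the target block is not quietly reversed through the other block; this interaction is what pins the constant in the exponent $n/5$. A secondary difficulty is the matching \emph{lower} bound $\Om(n^2\log n)$ for push on the star, which requires showing that the repelling walk's occupation measure is concentrated in, not merely supported on, the central band where each epoch is expensive.
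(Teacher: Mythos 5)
Your proposal takes essentially the same route as the paper on all four counts: the star push bound via an embedded biased walk on the leaf count (the paper's pseudo-states $S(r)$ coupled to the push chain of Section~\ref{BDC}), weighted by the $\Theta(n)$ expected number of centre-flip loops per pseudo-transition in the central band --- and your ``secondary difficulty'' about concentration there is exactly what the paper's Lemma~\ref{qwik} with $M=N^{3/4}$ supplies, since all transitions counted in $\E_0T_M$ occur while $r=(1+o(1))n/2$; the star pull bound via runs with per-run absorption probability $\Omega(1/n)$; the double-star push lower bound via gambler's-ruin domination of a block's leaf count by a constant-bias walk over a width-$\Theta(n)$ window, which the paper packages not as a bottleneck/supermartingale argument but as explicit trials with $\Pr(E_\nu)\le 4^{1-\nu}$ and a union bound over $k$ trials and both blocks; and the double-star pull bound via monochromatic-central-edge runs succeeding with probability $\Omega(1/n^2)$, waits of $O(n)$ for the central edge, and run lengths $O(n)$, giving $O(n^4)$. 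In short, the proposal is correct in outline and matches the paper's proof.
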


\begin{figure}[H]
\centerline{%
    {\begin{tikzpicture}
    \foreach \x in {4,...,14}
    {\draw (20*\x:2cm) node[rnode] (\x1) {};}
    \draw (0,0) node[rnode] (c1) {} (0.4,0.25) node {\large$c_1$};
    \foreach \x in {4,...,14}
    {\draw (c1)--(\x1) ;}
    \draw (-2.5,0) node {{\large$S_1$}} ;
    \begin{scope}[xshift=4cm,rotate=180]
    \foreach \x in {4,...,14}
    {\draw (20*\x:2cm) node[bnode] (\x2) {};}
    \draw (0,0) node[bnode] (c2) {} (0.4,-0.25) node {\large$c_2$} ;
    \foreach \x in {4,...,14}
    {\draw (c2)--(\x2) ;}
    \draw (-2.5,0) node {{\large$S_2$}} ;
    \end{scope}
    \draw (c1)--(c2);
  \end{tikzpicture}}}
  \caption{Double star $S^*$ with half of the vertices coloured red and half coloured blue.
  }\label{double:fig01}
\end{figure}
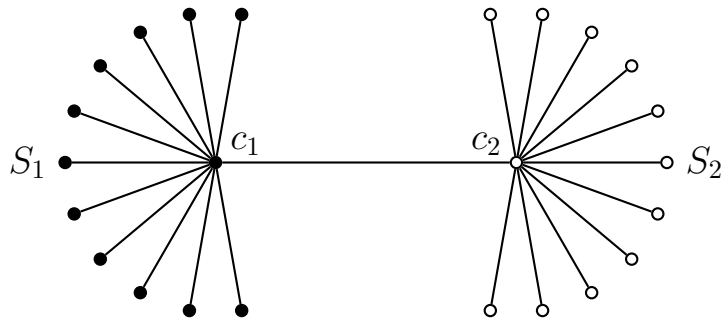

\ignore{
 In Section \ref{Sim} we give  simulation results for $C_n$ and $S_n$
for small graphs ($n \le 500$). Fig. \ref{star-sim} indicates a rapid convergence to $\E T$. For the  star graph $S_n$,  the experiments
show a clear separation in $\E T$ for  the three protocols.
}

At this point little remains of the possibility of  a meta-theorem except a vague hope that at least one of the push and pull protocols always has polynomial time to consensus. However, this is disproved by the example of the {\em barbell graph}, which consists of two cliques of size $n/2$ joined by a single edge.
\begin{theorem}
\label{Bn}
Let $T$ be the time to consensus in the two-party asynchronous discordant voting
process starting from any initial coloring with an equal number of red and blue vertices $R=B=n/2$.

For the barbell graph,
$\E T(\text{Push}) = \Om (2^{2n/5} )$,  and $\E T(\text{Pull})= \Th(2^n)$.
\end{theorem}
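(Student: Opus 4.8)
I would treat the two protocols by entirely different mechanisms, since on the barbell they are driven by opposite drifts. Write the barbell as two cliques $C_1,C_2$ (each a copy of $K_{n/2}$, with bridge vertices $u_1\in C_1$ and $u_2\in C_2$ joined by the single edge $u_1u_2$), and take the natural hardest colouring, analogous to Figure \ref{double:fig01}: $C_1$ entirely red and $C_2$ entirely blue. By the exchangeability of the vertices inside each clique I would reduce every configuration to the pair $(R_1,R_2)$, where $R_i$ is the number of red vertices in $C_i$, carrying the colours of the two bridge vertices only as a lower-order correction. The single discordant bridge edge is what couples the two coordinates, and controlling it is the heart of both arguments.

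\textbf{The pull bound $\Th(2^n)$.} The key observation is that discordant pull inside one clique of size $m$ is a \emph{balance-seeking} birth--death chain on the red count: from $R$ it moves to $R+1$ with probability $(m-R)/m$ and to $R-1$ with probability $R/m$, hence is reversible with respect to $\pi(R)\propto\binom{m}{R}$ (this is exactly the computation behind the pull part of Theorem \ref{Kn} with $m=n$). On the barbell the two coordinates evolve almost independently, being coupled only through the rare, $O(1/n)$-rate interactions across $u_1u_2$, so the reduced chain $(R_1,R_2)$ is reversible with stationary weight $\pi(R_1,R_2)\propto\binom{n/2}{R_1}\binom{n/2}{R_2}$ up to a subexponential bridge correction. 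The two consensus configurations are the corners $(n/2,n/2)$ and $(0,0)$, whose total stationary weight is $\big(2^{-n/2}\big)^2=2^{-(1+o(1))n}$; the exponent is $n$ rather than $n/2$ precisely because consensus forces \emph{both} cliques to an extreme at once, so the two binomial factors multiply. Feeding this into the standard reversible-chain estimates (return/commute time $\Th(1/\pi(\text{corner}))$ up to the polynomial relaxation time of the bulk) yields simultaneously $\E T(\text{Pull})=O(2^{(1+o(1))n})$ and $\Om(2^{(1-o(1))n})$.

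\textbf{The push lower bound $\Om(2^{2n/5})$.} The same single-clique computation now gives the \emph{opposite}, majority-seeking drift: inside a clique of size $m$ the red count moves up with probability $R/m$ and down with probability $(m-R)/m$, so each clique is pushed towards whichever monochromatic state is currently in the majority, and for an \emph{isolated} clique the all-red and all-blue states are absorbing. Hence the starting colouring is a genuine metastable state of the barbell, and the only route to consensus is to flip a whole clique against its own drift. I would fix attention on flipping $C_2$ to red and study $k=R_2$ as a birth--death chain with down-drift towards $k=0$, continually re-seeded at $u_2$ by the red clique $C_1$ across the bridge at rate $O(1/n)$. Using the explicit ratio $\prod_{i=1}^{j}\frac{(m-i)/m}{i/m}=\binom{m-1}{j}$, a gambler's-ruin estimate shows that, started from a single seed, the probability of reaching a red majority in $C_2$ before the seed dies is $2^{-\Om(m)}=2^{-\Om(n)}$; combining this with the $O(1/n)$ re-seeding rate and summing the geometrically many failed attempts gives an expected flip time $2^{\Om(n)}$, and the careful bookkeeping yields the stated exponent $2n/5$.

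\textbf{Main obstacle.} In both cases the crux is the single bridge edge. For pull I must show that the $O(1/n)$-rate coupling through $u_1u_2$ perturbs the product-binomial stationary measure and the relaxation time only by subexponential factors, so that the exponent stays exactly $n$; and since consensus is an \emph{absorbing} event rather than a reflecting target, I must pass rigorously from the stationary weight of the corners to the absorbing hitting time, for instance via the commute-time identity and effective resistance on the configuration graph. For push the hard part is precisely the re-seeding: because $C_1$ stays red it keeps injecting red into $u_2$, so $C_2$ is not a closed system, and I must prove that the majority-seeking drift destroys injected red faster than it can accumulate to the tipping level $k=m/2$, except with probability $2^{-\Om(n)}$. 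Making this quantitative --- balancing the injection rate against the drift so as to land on the constant $2/5$ rather than the heuristic $1/2$ --- is where the real work lies, and I would additionally discharge by a union bound the symmetric danger that $C_1$ is invaded by blue during the same time window.
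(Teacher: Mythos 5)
Your two arguments both diverge from the paper's, and both contain genuine gaps. For \textbf{pull}, the central gap is your claim that the reduced chain $(R_1,R_2)$ is, up to a subexponential bridge correction, reversible with product-binomial stationary weight. This fails exactly in the regime that controls the answer: once a clique is monochromatic it contains no internal discordant vertices, so its entire evolution is bridge-driven (it is frozen except for pulls by the bridge vertex), and conversely while both cliques are mixed the bridge edge need not be discordant at all --- the coupling is not a weak $O(1/n)$ perturbation of two independent balance-seeking chains, and no reversibility of the two-coordinate chain is available. Moreover the consensus corners are absorbing, so there is no stationary measure to perturb, and the commute-time/effective-resistance identities you invoke presuppose the reversibility you have not established. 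The paper's argument is entirely different and much more elementary: it proves only the lower bound, by noting that any path to consensus must pass through a configuration with a single minority vertex, say in $S_1$; restricting the process to $S_1$ (which only speeds it up, since every vertex of a mixed clique has a discordant neighbour inside the clique) reduces to the pull birth-and-death chain on the complete graph of Section \ref{KN}, where \eqref{Knfrom1} gives $\E_{N-1}T_N=\Omega(2^m)$ with $m$ the clique size. Note that your product heuristic predicts exponent equal to the \emph{total} vertex count (the two corner weights multiply), i.e.\ twice what the paper's proof actually establishes ($\Omega(2^{m})=\Omega(2^{n/2})$ when $R=B=n/2$, as in Table \ref{Compare-results}); the paper proves no matching upper bound at all, so you are attempting strictly more than the paper, but your sketch does not survive the freezing effect in either direction.

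For \textbf{push}, your plan has the right shape --- domination by a down-biased birth-and-death walk, gambler's ruin, and a union bound over re-seeded trials, which is exactly the trial structure the paper imports from its double-star proof --- but you defer the decisive quantitative step to ``careful bookkeeping,'' and that is where the paper's actual mechanism lives. The paper does not balance the $O(1/n)$ injection rate against the drift up to the majority level $m/2$. Instead it observes that while $1\le N_t\le m/5-9$ (blue count in the initially red clique) \emph{every} vertex of that clique is discordant, so the number of discordant vertices satisfies $M_t\ge m$, giving the uniform bounds $\Pr(N_{t+1}=N_t+1\mid N_t)\le (N_t+1)/M_t\le 1/5$ and $\Pr(N_{t+1}=N_t-1\mid N_t)\ge 2/5$; deleting loop steps yields a $p=1/3$, $q=2/3$ ruin problem with escape probability at most $2^{1-\nu}$, $\nu=\lfloor m/5\rfloor-O(1)$, whence $\Omega(2^{m/5})$ after the union bound over both cliques. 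The constant is produced by this domination threshold, not by your injection-versus-drift balance, and your promise to ``land on the constant $2/5$'' is unsupported: with $R=B=n/2$ the cliques have size $m=n/2$ and the paper's computation yields $\Omega(2^{n/10})$, as recorded in Table \ref{Compare-results} (the exponent $2n/5$ in the theorem statement matches the double-star theorem, where $\Pr(E_\nu)\le 4^{1-\nu}$, rather than the barbell computation, where the ratio is $2$). As written, your argument also needs, and lacks, a justification that the isolated-clique transition probabilities $R/m$ and $(m-R)/m$ dominate the true conditional probabilities on the barbell, where the pusher is chosen uniformly among all discordant vertices of the whole graph; the paper's $M_t\ge m$ bound is precisely the device that handles this.
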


A summary of these results is given in the table below.

\begin{figure}[H]
\begin{center}
\begin{tabular}{|l|c|c|c||c|c|c|}
\hline
&\multicolumn{3}{|c||}{Discordant voting}&\multicolumn{3}{|c|}{Asynchronous voting}\\
\cline{2-7}
& Push & Pull &Oblivious&Push & Pull&Oblivious\\
\hline\hline
Complete graph $K_n$ & $\Th(n \log n)$ & $\Th(2^n)$&\multirow{5}{*}{ $n^2/4$}&
$O(n^2)$&$O(n^2)$&$O(n^4)$\\
\cline{1-3} \cline{5-7}
Cycle $C_n$ & $\Th(n^2)$&$\Th(n^2)$&&$O(n^2)$&$O(n^2)$&$O(n^3)$\\
\cline{1-3} \cline{5-7}
Star graph $S_n$& $\Th(n^2 \log n)$& $O(n^2)$&&$O(n^2)$&$O(n^2)$&$O(n^3)$\\
\cline{1-3}\cline{5-7}
Double star $S^*_n$&$\Om(2^{n/5})$&$O(n^4)$&&$O(n^3)$&$O(n^4)$&$O(n^3)$\\
\cline{1-3}\cline{5-7}
Barbell graph& $\Om(2^{n/10})$& $\Theta(2^{n/2})$&&$O(n^4)$&$O(n^4)$&$O(n^4)$\\
\hline
\end{tabular}
\end{center}
\caption{Comparison of expected time to consensus ($\E T$) for discordant and ordinary asynchronous voting protocols on connected $n$-vertex graphs, starting from $R=B=n/2$.}
\label{Compare-results}
\end{figure}

The column for ordinary asynchronous pull voting in Table \ref{Compare-results} follows from
\eqref{Pull1}. The column for ordinary asynchronous pull voting from $\E T=O(n^2 m)$ (see below Remark \ref{Obliv}). To complete the column for ordinary asynchronous push voting, we
used a result of \cite{CR}. For any graph $G=(V(G),E(G))$,
\begin{equation}\label{Push1}
\E T(\text{push})=O(1/\Psi(G)),
\end{equation}
 where
\begin{equation}\nonumber
\Psi(G) = \frac{2C(G)}{n d_{\max}}\;
\min_{S \subset V(G)} \frac{1}{\min\{J(S),J(S^c)\}}\;
\sum_{(v,w) \in E(S:S^c)}  \frac{1}{d(v)d(w)}.
\end{equation}
The expression is evaluated over sets $S \ne \es, V(G)$, and $d_{\max}$ is maximum degree, $C(G) = (\sum_{v \in V} 1/d(v))^{-1}$, $E(S:S^c)$ are the edges between $S$ and $S^c$,
and $J(S) = \sum_{v \in S} d(v)^{-1}$.
The parameter $\Psi$ does not seem related to the classical graph parameters,
but can be directly evaluated for the graphs we consider.
For regular graphs,
\[
\Psi=\frac{2}{n^2} \Phi,
\]
in which case $\E T =O(n^2/\Phi)$, which agrees with the asynchronous pull model in \eqref{Pull1}.

\ignore{
 {\bf Other results on discordant voting.}

The performance of discordant voting on random graphs mirrors that of
Theorem \ref{Kn} for the complete graph. The proofs require a level of detail
which seems excessive as compared to the simple examples given here, and will be given in \cite{CDFGnp}.
In a  different direction the paper \cite{CRKn} considers the range of behavior
of discordant voting on $K_n$ in the case where there is a mixture of $\a$ push and $1-\a$ pull voting. When $\a=1$ we have push voting, and when $\a=0$ pull voting. The case $\a=1/2$ corresponds to oblivious, and the entire range of $\E T$ from $\Th (n \log n)$ to $2^n$ can be found by judicious choice of $\a$.
}

\subsection*{Asynchronous discordant voting model }

We next give a formal definition of the discordant voting process.
 Given a graph $G=(V,E)$, with $n=|V|$. Each vertex $v\in  V$ is labelled with an \emph{opinion} $X(v)\in \{0,1\}$. We  call
$X$ a \emph{configuration} of opinions. We can think of the opinions as having colours; e.g. red (0)  and blue (1),  or  black (0) and white (1).
An edge $e=uv\in E$ is \emph{discordant} if $X(u)\neq X(v)$. Let $K(X)$ denote the set of  discordant edges at time $t$. A vertex $v$ is discordant if it is incident with any discordant edge, and $D(X)$ will denote the set of discordant vertices in $X$. We consider three random update rules for opinions $X_t$ at time $t$.
\begin{description}\label{descrip}
  \item[Push:] Choose $v_t\in D(X_t)$, uniformly at random, and a discordant neighbour $u_t$ of $v_t$ uniformly at random. Let $X_{t+1}(u_t)\gets X_t(v_t)$, and $X_{t+1}(w)\gets X_t(w)$ otherwise.
  \item[Pull:] Choose $v_t\in D(X_t)$, uniformly at random, and a discordant neighbour $u_t$ of $v_t$ uniformly at random. Let $X_{t+1}(v_t)\gets X_t(u_t)$, and $X_{t+1}(w)\gets X_t(w)$ otherwise.
  \item[Oblivious:] Choose $\{u_t,v_t\}\in K(X_t)$ uniformly at random. With probability \nicefrac12,  $X_{t+1}(v_t)\gets X_t(u_t)$, with probability \nicefrac12,  $X_{t+1}(u_t)\gets X_t(v_t)$,  and $X_{t+1}(w)\gets X_t(w)$ otherwise.
\end{description}
These three processes are Markov chains on the configurations in $G$, in which the opinion of exactly one vertex is changed at each step.
\ignore{
The three update rules select this \emph{change} vertex in different ways. For example, the push rule selects an {\em active} discordant vertex $u$ and the change vertex is a discordant neighbour $v$.
Once the change vertex is chosen it simply flips its opinion from $0$ to $1$ or vice versa.
}
Assuming $G$ is connected, there are two absorbing states, when $X(v)=0$ for all $v\in V$, or $X(v)=1$ for all $v\in V$, where no discordant vertices exist. When the process reaches either of these states, we say that is has converged. Let $T$ be the step at which convergence occurs. Our object of study is $\E T$.

\subsection*{Structure of the paper.}

A major obstacle in the analysis discordant voting, is that the effect of recoloring a vertex is not always monotone.
For each of the graphs studied, the way to bound $\E T$ differs.
The proof of the pull voting result for the cycle $C_n$ in particular, is somewhat delicate, and  requires an analysis of the optimum of a linear program based on a potential function.

The general proof methodology is to map the process to a biased random walk on the line $0,...,n$. In Section \ref{BDC} we prove results for a Birth-and-Death chain which we call the Push chain. This chain can be coupled with many aspects of the discordant voting process.
We then prove Theorems \ref{Kn}, \ref{Cn}, \ref{Sn} and \ref{Bn}  in that order.


\section{Birth-and-Death chains}\label{BDC}
A Markov chain $(X_t)_{t \geq 0}$ is said to be a Birth-and-Death chain on state space $S=\{0,\ldots, N\}$ if given $X_t =i$ then the possible values of $X_{t+1}$ are $i+1, i$ or $i-1$ with probability $p_i,$ $r_i$ and $q_i$ respectively. Note that $q_0 = p_N = 0$.
In this section we assume that $r_i=0$, $p_0=1$, $q_N=1$, $p_i > 0$ for $i \in \{0,\ldots, N-1\}$ and $q_i > 0$ for $i \in \{1,\ldots, N\}$. We denote $\E_i Y$ the expected value of random variable $Y$ when the chain starts in $i$ (i.e., $X_0 = i$). Finally, we define the (random) hitting time of  state $i$ as
$T_i = \min \{t \geq 0: X_t = i\}.$

We summarize  the results we require on Birth-and-Death chains (see Peres, Levin and Wilmer \cite[2.5]{LPW}).

Say that a probability distribution $\pi$ satisfies the detailed balance equations, if
\begin{equation}\label{eqn:balanced}
\pi(i) P(i,j) = \pi(j)P(j,i) \text{, for all $i,j \in S$}.
\end{equation}
Birth-and-Death chains with $p_i=P(i,i+1), q_i=P(i,i-1)$  can be shown to satisfy the detailed balance equations.
It follows from this, (see e.g. \cite{LPW})  that
\begin{equation}\label{eq:localExpHit}
\E_{i-1}{T_i} = \frac{1}{q_i\pi(i)}\sum_{k=0}^{i-1}\pi(k)
\end{equation}
An equivalent formulation (see \cite{LPW}) is $\E_0T_1=1/p_0=1$ and in general
\beq{ETi}
E_{i-1}T_{i} =  \sum_{k=0}^{i-1} \frac{1}{p_k}\frac{q_{k+1}\cdots q_{i-1}}{ p_{k+1} \cdots p_{i-1}},\quad
\text{ for } i \in \{1,\ldots, N\}.
\eeq
In writing this expression we follow the convention that if $k=i-1$ then
$\frac{q_{k+1}\cdots q_{i-1}}{p_{k+1}\cdots p_{i-1}}=1$ so that the last term is
$1/p_{i-1}$. Note also that the final index $k$ on $p_k$ is $k=N-1$, i.e. we never divide by $p_N=0$.

Starting from state 0, let $T_M$ be the number of transitions needed to reach state $M$ for the first time. For any $M \le N$, we have that $\E_0T_M= \sum_{i=1}^{M} \E_{i-1} T_{i}$. For example, $\E_0 T_1 = \frac{1}{p_0}=1$ and $\E_0 T_2 = 1+\frac{1}{p_1}+\frac{q_1}{p_0p_1}$ etc.
Thus, for $M \ge 1$
\begin{equation}\label{ETM}
\E_0T_M = \sum_{i=1}^{M} \E_{i-1}T_{i} =
\sum_{i=1}^{M}\sum_{k=0}^{i-1} \frac{1}{p_k }\prod_{j=k+1}^{i-1}\frac{q_j}{p_j}.
\end{equation}

We define two Birth-and-Death chains which feature  in our analysis.
The chains have states $\{0,1,...,i, ..., N\}$ where $N=n/2$ (assume $n \ge 2$ even). The transition probabilities from state $i$ given by $P(i,i+1)$, $Q(i,i+1)=1-P(i,i+1)$.
We refer to these chain as the push chain, and pull chain respectively.

\paragraph{Push Chain.}
Let $Z_t$ be the state occupied by the push chain at step $t \ge 0$.
Let $\d \in \{-1,0,+1\}$ be fixed. When applying results for the push chain
in our proofs, we will state the value of $\d$ we use.
The transition probability $p_i=P(i,i+1)$ from $Z_t=i$,
is given by
\begin{equation}\label{eqn:p_idef}
p_i =
\begin{cases}
1, & \text{if }i = 0 \\
1/2+i/n +\d/n, & \text{if }i \in \{1, \ldots, n/2-1\} \\
0 , & \text{if }i = n/2
\end{cases}.
\end{equation}

\paragraph{Pull Chain.}
Let $\ol Z_t$ be the state occupied by the pull chain at step $t \ge 0$.
Given that $\ol Z_t=i$, the transition probability $\ol p_i=\ol P(i,i+1)$ is given by
\begin{equation}\label{eqn:pu_idef}
\ol p_i =
\begin{cases}
1, & \text{if }i = 0 \\
1/2-i/n -\d/n, & \text{if }i \in \{1, \ldots, n/2-1\} \\
0 , & \text{if }i = n/2
\end{cases}
.
\end{equation}
For $1 \le i \le N-1$ the pull chain  is the push chain with the probabilities reversed, i.e. $\ol p_i=q_i$.

\subsection*{Push Chain: Bounds on hitting time}
{\bf Push Chain: Upper bound on hitting time.}

\begin{lemma}\label{slow}
For any $M \le N$ , let $\E_0T_M$ be the expected hitting time
of $M$ in the push chain $Z_t$ starting from state 0.
Then
\[
\E_0 T_M \le 2 N \log M +O(1).
\]
\end{lemma}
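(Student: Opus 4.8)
The plan is to work directly from the closed form \eqref{ETM},
\[
\E_0T_M=\sum_{i=1}^{M}\sum_{k=0}^{i-1}\frac{1}{p_k}\prod_{j=k+1}^{i-1}\frac{q_j}{p_j},
\qquad q_j=1-p_j,
\]
and to interchange the order of summation so that $k$ becomes the outer index:
\[
\E_0T_M=\sum_{k=0}^{M-1}\frac{1}{p_k}\,S_k,\qquad
S_k:=\sum_{i=k+1}^{M}\ \prod_{j=k+1}^{i-1}\frac{q_j}{p_j}.
\]
The only facts I would use about the chain are that $\frac{q_j}{p_j}=\frac{N-j-\d}{N+j+\d}$ is at most $1$ and strictly decreasing in $j$ (for $1\le j\le N-1$), that $p_{k+1}-q_{k+1}=(k+1+\d)/N$, and that $\frac{1}{p_k}\le 2$ for $k\ge1$ while $p_0=1$.

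I would bound $S_k$ in two complementary ways. The \emph{trivial} bound uses $\prod_{j=k+1}^{i-1}\frac{q_j}{p_j}\le 1$, giving $S_k\le M-k$. The \emph{geometric} bound uses monotonicity, $\frac{q_j}{p_j}\le\frac{q_{k+1}}{p_{k+1}}$ for $j\ge k+1$, to compare $S_k$ with a geometric series:
\[
S_k\le\sum_{t\ge0}\Big(\tfrac{q_{k+1}}{p_{k+1}}\Big)^{t}
=\frac{p_{k+1}}{p_{k+1}-q_{k+1}}
=\frac{N}{2(k+1+\d)}+\frac12 .
\]
The lone degenerate case $k=0,\ \d=-1$ (where $q_1/p_1=1$ and the series bound is vacuous) I would dispose of with the trivial bound $S_0\le M\le N$.

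The argument then splits on the size of $M$. If $M\le\sqrt N$, I would use only the trivial bound, so that $\E_0T_M\le 2\sum_{k=0}^{M-1}(M-k)=M(M+1)\le 2M^2$; a one-line check shows $2M^2\le 2N\log M$ on $2\le M\le\sqrt N$ (the function $2N\log M-2M^2$ is concave and positive at both endpoints $M=2$ and $M=\sqrt N$), while $M=1$ only contributes to the $O(1)$. If $M>\sqrt N$, I would use the geometric bound, obtaining
\[
\E_0T_M\le\sum_{k=0}^{M-1}\Big(\frac{N}{k+1+\d}+1\Big)\le N\log M+O(N).
\]
Since $M>\sqrt N$ forces $\log M>\tfrac12\log N$, the additive $O(N)$ is dominated by a further $N\log M$ for all large $N$, so the right-hand side is at most $2N\log M$; bounded $N$ is absorbed into the $O(1)$. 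Combining the two cases gives $\E_0T_M\le 2N\log M+O(1)$.

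The crux, and the reason the two regimes seem unavoidable, is that the upward drift $p_i-q_i=(i+\d)/N$ is very weak near the bottom of the chain. A single uniform estimate of the form $\E_{i-1}T_i\lesssim N/i$, summed over all $i\le M$, produces $2N\sum_{i\le M}1/i=2N\log M+\Theta(N)$, and that stray $\Theta(N)$ cannot be folded into an $O(1)$ error. What rescues the bound is that for $i\lesssim\sqrt N$ the local hitting times are in reality only $O(i)$ (this is exactly where $S_k\le M-k$ beats the geometric bound), so the slow bottom of the walk contributes a harmless $O(N)$ rather than its naive $O(N\log N)$ share. Tracking this cancellation, while exploiting the factor-$2$ slack deliberately built into the target $2N\log M$, is the delicate part of the accounting; everything else is elementary summation.
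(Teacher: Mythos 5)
Your proposal is correct, and its core is the same as the paper's: interchange the order of summation in \eqref{ETM}, bound the inner sum by the geometric series with ratio $q_{k+1}/p_{k+1}$ (valid by the monotonicity of $q_j/p_j$), and use $p_{k+1}-q_{k+1}=(k+1+\d)/N$ together with $1/p_k\le 2$ to reduce everything to a harmonic sum --- this is precisely the chain \eqref{eqn:basicSum}--\eqref{EOTMval}. Where you differ is in the bookkeeping of the additive error, and there your version is tighter than the paper's. The paper applies the trivial bound $\prod_j q_j/p_j\le 1$ only to the $k=0$ block (the ``$M$'' in \eqref{flick}, which also quietly covers the degenerate case $q_1/p_1=1$ when $\d=-1$) and then declares the conclusion; read literally, its estimates deliver $2N\log M+O(N)$, since the term $M\le N$ and the constant arising from the harmonic sum are $\Theta(N)$ when $M=\Theta(N)$. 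Your two-regime split at $M=\sqrt N$ --- the trivial bound $S_k\le M-k$ giving $O(M^2)\le 2N\log M+O(1)$ below the threshold, and the geometric bound combined with $\log M>\frac12\log N$ above it, exploiting the factor-2 slack in the target --- is exactly what is needed to compress that stray $O(N)$ into the stated $O(1)$. The extra care is immaterial for the paper's applications (the lemma is only invoked to obtain $\Theta(n\log n)$ bounds with $M$ polynomial in $N$), but your diagnosis of the weak-drift bottom of the chain, where $\E_{i-1}T_i=O(i)$ rather than $O(N/i)$, is accurate and consistent with the lower-bound picture in Lemma~\ref{qwik}.
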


\begin{proof}
Using \eqref{ETM} and recalling the notational convention given below \eqref{ETi}
we can  change the order of summation to give
\begin{equation}\label{eqn:basicSum}
E_0T_M  = \sum_{k=0}^{M-1} \sum_{i=k+1}^{M} \frac{1}{p_k }\frac{q_{k+1}\cdots q_{i-1}}{p_{k+1} \cdots p_{i-1}}=
\frac{1}{p_{M-1}}+ \sum_{k=0}^{M-2} \sum_{i=k+1}^{M-1} \frac{1}{p_k }\frac{q_{k+1}\cdots q_{i-1}}{p_{k+1} \cdots p_{i-1}}.
\end{equation}
Using \eqref{eqn:p_idef}, we see that for $1 \le k \le N-2$ we see that $q_k/p_k \ge q_{k+1}/p_{k+1}$, $q_1/p_1 \le 1$, and for $2 \le k \le N-1$ that  ${q_{k}}/{p_{k}} < 1$.
As $p_0=1$, we upper bound  $E_0 T_M$ by
\beq{flick}
E_0 T_M \leq {M}+\frac{1}{p_{M-1}}+\sum_{k=1}^{M-2} \frac{1}{p_k }\sum_{i=k+1}^{M-1} \left(\frac{q_{k+1}}{p_{k+1}}\right)^{i-k-1},
\eeq
and
\beq{EOTMval}
 \sum_{k=1}^{M-2} \frac{1}{p_k }\sum_{\ell=0}^{\infty} \left(\frac{q_{k+1}}{p_{k+1}}\right)^\ell =
\sum_{k=1}^{M-2} \frac{1}{p_k} \frac{1}{1-\frac{q_{k+1}}{p_{k+1}}}=
\sum_{k=1}^{M-2}\frac{p_{k+1}}{p_k}\frac{1}{p_{k+1}-q_{k+1}}.
\eeq
As $q_k=1-p_k$, $p_k-q_k=2p_k-1>0$ for all $k \in \{2,\ldots, N-1\}$,
then $\frac{1}{p_k-q_k} =\frac{N}{k+\d}$. For all $k \in \{1, \ldots, N-2\}$
we have $\frac{p_{k+1}}{p_k} \leq 2$.
Using \eqref{flick} with the upper bounds given in \eqref{EOTMval},  we obtain the required conclusion.
\end{proof}

{\bf Push Chain: Lower bound on hitting time.}

\begin{lemma}\label{qwik}
Let $\d=0$ in \eqref{eqn:p_idef}.
 Let $\E_0T_M$ be the expected hitting time
of $M$ in the push chain $Z_t$ starting from state 0.
There exists a constant $C$ such that, for any $\sqrt{N} \le M =o(N^{3/4})$,
\[
\E_0 T_M \ge C (N \log M/\sqrt{N} +\sqrt{N}).
\]
\end{lemma}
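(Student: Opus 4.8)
The plan is to work directly from the exact hitting-time formula \eqref{ETM} and extract a lower bound by keeping only a carefully chosen family of terms. With $\d=0$ in \eqref{eqn:p_idef} we have $p_k=\tfrac12+\tfrac{k}{2N}$ and $q_k=\tfrac12-\tfrac{k}{2N}$ for $1\le k\le N-1$, so that $q_j/p_j=(N-j)/(N+j)$. Reordering the double sum exactly as in \eqref{eqn:basicSum} gives
\[
\E_0 T_M=\sum_{k=0}^{M-1}\frac{1}{p_k}\sum_{i=k+1}^{M}\prod_{j=k+1}^{i-1}\frac{q_j}{p_j}.
\]
Since $p_k\le1$ we have $1/p_k\ge1$, so it suffices to bound from below the double sum $\Sigma:=\sum_{0\le k<i\le M}\prod_{j=k+1}^{i-1}(N-j)/(N+j)$.

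The key estimate is a Gaussian approximation of the inner product. Using the expansion $\log\frac{N-j}{N+j}=-\frac{2j}{N}-\frac{2j^3}{3N^3}-\cdots$ and summing over $k<j<i$, the leading term contributes $-(i^2-k^2)/N+O(M/N)$ while the cubic and higher terms contribute $O(M^4/N^3)$ in total; the hypothesis $M=o(N^{3/4})$ makes the whole error $o(1)$ uniformly. Hence, uniformly over $0\le k<i\le M$,
\[
\prod_{j=k+1}^{i-1}\frac{N-j}{N+j}=(1+o(1))\,e^{-(i^2-k^2)/N}.
\]
This is precisely where the restriction $M=o(N^{3/4})$ is used, and establishing it with a \emph{uniform} error bound is the main technical step.

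To finish, I would discard all pairs except those with $i^2-k^2\le N$, for which the displayed product is at least $\tfrac12 e^{-1}$ once $N$ is large. It then remains to count $P:=\#\{(k,i):0\le k<i\le M,\ i^2-k^2\le N\}$. For fixed $i$ the admissible $k$ are those with $k\ge\sqrt{\max(0,i^2-N)}$, giving about $i$ choices when $i\le\sqrt N$ and about $N/(2i)$ choices when $\sqrt N<i\le M$. Summing,
\[
P\ \ge\ (1+o(1))\Big(\sum_{i\le\sqrt N} i\ +\ \sum_{\sqrt N<i\le M}\frac{N}{2i}\Big)\ =\ (1+o(1))\Big(\tfrac12 N+\tfrac12 N\log\tfrac{M}{\sqrt N}\Big),
\]
so that $\Sigma\ge\tfrac12 e^{-1}P=\Om\big(N(1+\log(M/\sqrt N))\big)$. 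This yields $\E_0 T_M=\Om\big(N(1+\log(M/\sqrt N))\big)$, which is stronger than the asserted bound (for either reading of $N\log M/\sqrt N$, since $\sqrt N\log M=o(N)$ throughout the range $\sqrt N\le M=o(N^{3/4})$).

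Besides the uniform product estimate, the one place to be careful is that \emph{both} regimes of the count must be retained: the $\Th(N)$ bulk comes from $i\le\sqrt N$, while the logarithmic factor comes entirely from $\sqrt N<i\le M$, so discarding either regime would lose the correct order. The condition $\sqrt N\le M$ guarantees that the bulk is summed all the way up to $\sqrt N$ and that the tail range is available, while $M=o(N^{3/4})$ is exactly what keeps the Gaussian approximation valid with $o(1)$ error out to $i=M$.
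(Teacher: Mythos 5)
Your proof is correct and is essentially the paper's own argument: both start from \eqref{ETM}, prove the same uniform Gaussian estimate $\prod_{j=k+1}^{i-1} q_j/p_j = (1+o(1))\,e^{-(i^2-k^2)/N}$ using $M=o(N^{3/4})$ in exactly the same way, and then extract the lower bound by keeping only the pairs with $i^2-k^2=O(N)$ --- your level-set count of roughly $N/2i$ admissible $k$ for each $i>\sqrt{N}$ is the same computation as the paper's bound $\s(i)\ge \frac{\b N}{2i}e^{i^2/N}$, obtained there by keeping the terms $k\ge i-\b N/i$. The only (harmless) difference is that your count over the bulk $i\le\sqrt{N}$ yields an $\Om(N)$ additive term where the paper settles for $\Om(\sqrt{N})$, a slight strengthening within the same method.
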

\begin{proof}
For $0<x <1$,
\[
\frac{1-x}{1+x}= \exp \left\{-2\brac{x+ \frac{x^3}{3}+ \cdots + \frac{x^{2\ell+1}}{2\ell+1}+\cdots}\right\}.
\]
Thus with $N=n/2$
\begin{flalign}
\prod_{j=k+1}^{i-1} \frac{q_j}{p_j} &= \prod_{j=k+1}^{i-1} \frac{1-j/N}{1+j/N}
\label{lab1}\\
&= \exp\left\{ -2 \brac{\sum_{j=k+1}^{i-1} \frac jN + \sum \frac{ (j/N)^3}{3}+ \cdots
+\sum \frac{(j/N)^{2\ell+1}}{2\ell +1}+\cdots}\right\}\nonumber\\
&=\exp \{-2 \F\},\label{lab2}
\end{flalign}
say. If $f(s)$ is non-negative and monotone increasing, then
$\sum_{s=k+1}^{i-1} f(s) \le \int_{k}^i f(s)\,ds$. Thus,
the sum of terms in $(j/N)^3$ and above in $\F$ can be bounded above by
\begin{flalign*}
\sum_{\ell \ge 1}\sum_{j=k+1}^{i-1} \frac{(j/N)^{2 \ell +1}}{2 \ell+1}
&\le \sum_{\ell \ge 1}\frac{1}{(2\ell+1)N^{2\ell+1}} \int_k^i x^k dx\\
\le  \sum_{\ell \ge 1}\frac{1}{(2\ell+1)N^{2\ell+1}} \cdot \frac{i^{2\ell+2}}{2\ell+2}\\
&= O\bfrac{i^4}{N^3}\sum\frac{1}{(2\ell+1)(2\ell+2)}=O\bfrac{i^4}{N^3}.
\end{flalign*}
Thus, using our assumption that $M=o(N^{3/4})$,
\[
\F= \frac{i(i-1)}{2N} -\frac{k(k+1)}{2N} + O \brac{ \frac{i^4}{N^3}}= \frac{i^2}{2N}-\frac{k^2}{2N}- \frac{i+k}{2N}- o(1).
\]
Replacing $\F$ in \eqref{lab2} with the upper bound given above,  gives a lower bound on the term \eqref{lab2} in \eqref{ETM}. Thus
\beq{Annoying}
\E_0T_M\ge (1-o(1))  \sum_{i=0}^M  \sum_{k=0}^{i-1} \frac {1}{p_k} \exp\brac{-\frac{i^2}{N}}\exp\brac{\frac{k^2}{N}}.
\eeq
For $i \le M$ the last term on the righthand side of \eqref{Annoying} is bounded below
by a positive constant.
Let
\beq{ssi}
\s(i) = \sum_{k=0}^{i-1} \exp \bfrac{k^2}{N}.
\eeq
Let $\b=(1/2) \log 2 \approx 0.34$.
We claim that, if $i \ge \sqrt{N}$ then
\beq{Si}
\s(i) \ge \frac{\b N}{2i} e^{i^2/N}.
\eeq
Let $a=\b N/i$ then for $i\ge \sqrt N$, $i-a>0$. For $k \ge i-a$
\[
\frac{k^2}{N} \ge \frac{i^2}{N}-\frac{2ia}{N}+\frac{a^2}{N}=
\frac{i^2}{N}-\frac{2i}{N}\b\frac{N}{i}+\b\frac{N}{i^2}
\ge\frac{i^2}{N}-2\b.
\]
If $k \ge i-a$, then $\exp \{k^2/N\} \ge \frac 12 \exp \{i^2/N\}$.
As there are at least $a$ such values of $k$, it follows that $\s(i) \ge \b N/2i e^{i^2/N}$.

Let $ \sqrt N \le i \le M=o(N^{3/4})$.
 Replace \eqref{ssi} in\eqref{Annoying} with \eqref{Si}. Noting that $p_0=1$ and  for $1\le k \le M$, $p_k\sim 1/2$, we can assume $(1-o(1))/p_k \ge 1/2$ to give
\[
\E_0 T_M \ge  \sum_{i< \sqrt{N}} \frac{e^{-1}}{2}+ \sum_{i=\sqrt{N}}^M \frac{\b N}{2 i}\ge \sqrt{N}/6+
 \frac{\b N}{3} \log \frac{M}{\sqrt N}.
\]
\ignore{
\[
\sum_{i=\sqrt{N}}^M\sum_{k=0}^{i-1} \frac{1}{p_k}\prod_{j=k+1}^{i-1}
 \frac{q_j}{p_j} \ge
  \sum_{k=0}^{i-1} \Th(1)\;\s(i)\; e^{-i^2/N}
 \ge \Th \bfrac Ni.
\]
Thus
\[
\E_0T_M \ge \sqrt{N}+ \Th(1) \sum_{i= \sqrt N}^M \frac Ni \ge \Th(N) \log \frac{M}{\sqrt N}.
\]
}
\end{proof}


\section{ Voting on the complete graph $K_n$.} \label{KN}
For the complete graph $K_n$, the probability $B$ increases at a given step is $B(t)/n$, whereas in the pull process it is $R(t)/n=1-B(t)/n$.
The chain  defined by $Y_t = \max\{R(t), B(t)\}-n/2$ is a Birth-and-Death chain.
We  study the time that takes $Y_t$ to reach $N=n/2$ starting from 0.

{\bf Theorem \ref{Kn}: Push process.}
For the push model, the process $Y_t$ is
identical to the push chain $Z_t$  with transitions given
by \eqref{eqn:p_idef} with $\d=0$. This was analysed Section \ref{BDC}.

{\bf Theorem \ref{Kn}: Pull process.}
For the pull model, the process $Y_t$ is
identical to the pull chain $\ol Z_t$  with transitions given
by \eqref{eqn:pu_idef} with $\d=0$

For the pull model, the process $Y_t$ is identical to the pull chain $\ol Z_t$  with transitions given by \eqref{eqn:pu_idef}. To begin with, observe that  $w_k = \binom{n}{N+k},k=0,1,\ldots,N$ satisfies the detailed balance equation \eqref{eqn:balanced}. Hence we have $\p(k)=w_k/W$, where $W=w_0+w_1+\cdots+w_N$.

It follows from \eqref{eq:localExpHit} that
$$\E_{i-1}T_i = \frac{2n}{n+2i}\cdot\frac{1}{\binom{n}{N+i}}\cdot\sum_{k=0}^{i-1}\binom{n}{N+k}.$$
Putting $i=N$ we have
\begin{equation}\label{Knfrom1}
\E_{N-1}T_N=\sum_{k=0}^{N-1}\binom{n}{N+k}= \frac 12 \brac{2^{n}-2+  \binom{n}{N}}
=\Omega(2^n).
\end{equation}
On the other hand, an upper bound
$$\sum_{i=1}^N\E_{i-1}T_i\leq 2\cdot
2^n\cdot \sum_{i=1}^N\frac{1}{\binom{n}{N+i}}=O(2^n),$$
follows from a result of Sury \cite{Sury}, that
$$\sum_{i=1}^{N} \frac{1}{\binom{n}{N+k}} = \frac{n+1}{2^n}\sum_{i=0}^n \frac{2^i}{i+1} = O(1).$$


\section{Voting on the cycle}
An $n$-cycle $G$, with $V=[n]$, has $E=\{ (i,\,i+1):i\in[n]\}$, where we identify vertex $n+i$ with vertex $i$. See Fig.~\ref{cycle:fig01}.

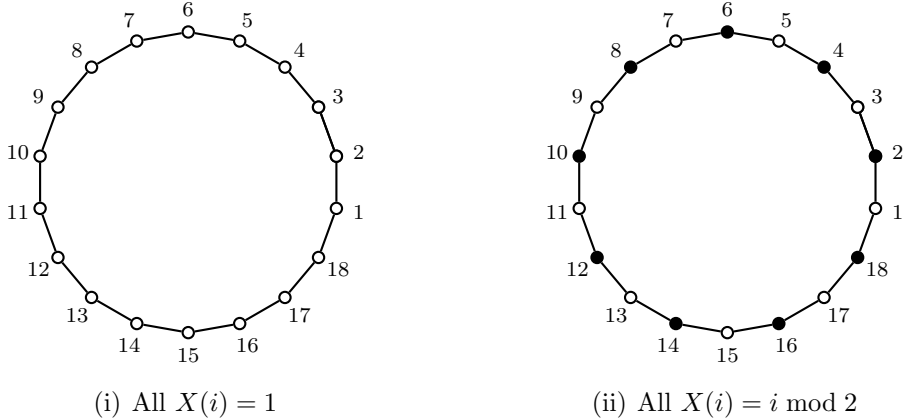
\begin{figure}[H]
\centerline{%
    \subfigure[All $X(i)=1$]{\begin{tikzpicture}[font=\scriptsize]
    \foreach \x in {0,2,...,18}
    {\draw (20*\x+10:2cm) node[bnode] (\x) {};}
     \foreach \x in {1,3,...,19}
    {\draw (20*\x+10:2cm) node[bnode] (\x) {};}
     \foreach \x in {0,1,2,...,18}
    {\pgfmathtruncatemacro{\y}{\x + 1} \draw (\x) to (\y) ;};
    \foreach \x in {1,2,...,18}
    {\draw (20*\x-30:2.3cm) node {\x};}
  \end{tikzpicture}}\hspace*{2cm}
  \subfigure[All $X(i)=i \bmod 2$]{\begin{tikzpicture}[font=\scriptsize]
    \foreach \x in {0,2,...,18}
    {\draw (20*\x+10:2cm) node[rnode] (\x) {};}
     \foreach \x in {1,3,...,19}
    {\draw (20*\x+10:2cm) node[bnode] (\x) {};}
     \foreach \x in {0,1,2,...,18}
    {\pgfmathtruncatemacro{\y}{\x + 1} \draw (\x) to (\y) ;}
    \foreach \x in {1,2,...,18}
    {\draw (20*\x-30:2.3cm) node {\x};}
  \end{tikzpicture}}}
  \caption{Cycle with $n=18$}\label{cycle:fig01}
\end{figure}

Let $X=X(t)$ denote the (configuration of opinions) of  the voting process at time $t$,
Let $K(X)$ denote the set of  discordant edges of $X$
and let $k(X)=|K|$. Let $D(X)$  denote the set of discordant vertices in $X$.

We say $i+1$, $i+2,\ldots,j$ is a \emph{run} of length $(j-i)$ $(1\leq j-i<n)$ if $X(i)\neq X(i+1)=X(i+2)=\cdots = X(j)\neq X(j+1)$. A \emph{singleton} is a run of length 1, a single vertex. These vertices require special treatment, since they lie in two discordant edges. Note that the number of runs, $k(X)$, in $X$ is equal to the number of discordant edges. Also $k$ is even, since red and blue runs must alternate, so we will write $r(X)=\frac12k(X)$, and $k_0=2r_0=k(X_0)$. Thus $r(X)$ is the number of paths of a given colour. Then $T$ is the first $t$ for which $k(X_t)=r(X_t)=0$, (a cycle is not a path).

Let the $k$ runs in $X$ have lengths $\ell_1,\ell_2,\ldots,\ell_k$ respectively, and let $s(X)$ denote the number of singletons. Clearly $\sum_{i=1}^k\ell_i=n$, and there are $\kappa=2k-s$ discordant vertices, so $k\leq\kappa\leq 2k$.

We wish to determine the convergence time $T$ for an arbitrary configuration $X_0$ of the push or pull process to reach an absorbing state $X_T$ with $X_T(i)=X_T(1)$ $(i\in[n])$. In these processes, the run lengths behave rather like symmetric random walks on the line. However, an analysis using classical random walk techniques~\cite{Feller} seems problematic. There are two main difficulties. Firstly, the $k$ ``walks'' (run lengths) are correlated. If a run is long, the adjacent runs are likely to be shorter, and vice versa. Secondly, when the change vertex is a singleton, the lengths of three adjacent runs are combined, so three walks suddenly merge into one. One of the three runs is a singleton, but the other two may have arbitrary lengths.

Therefore, we will use the random walk view only to give a lower bound on the convergence time. For the upper bound, we use a different approach. We will define a \emph{potential function}
\[ \psi(X)\,=\,\sum_{i=1}^k \sqrt{\ell_i}\,,\]
where $\psi(X)=0$ if and only if $k(X)=0$. The important feature of $\psi$ is that it is a separable and strictly concave function of the $\ell_i$ ($i\in[k]$). Almost any other function with these properties would give similar results.
\begin{lemma}\label{cycle:lem01}
For any configuration $X$ on the $n$-cycle with $k$ runs, $\psi(X) \leq \sqrt{kn}$.
\end{lemma}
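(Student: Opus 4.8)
The plan is to recognize this as an immediate consequence of the Cauchy--Schwarz inequality (equivalently, of the concavity of the square-root function via Jensen's inequality), exploiting the single linear constraint that the run lengths satisfy. Recall that the $k$ runs partition the cycle, so as already noted in the text we have the constraint $\sum_{i=1}^{k}\ell_i = n$ with each $\ell_i \geq 1$. The quantity $\psi(X)=\sum_{i=1}^{k}\sqrt{\ell_i}$ is then to be bounded over the simplex determined by this constraint, and the bound $\sqrt{kn}$ is exactly what one obtains at the symmetric point.

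The key step is to apply Cauchy--Schwarz to the vectors $(1,1,\ldots,1)$ and $(\sqrt{\ell_1},\sqrt{\ell_2},\ldots,\sqrt{\ell_k})$ in $\mathbb{R}^k$:
\[
\sum_{i=1}^{k}\sqrt{\ell_i} \;=\; \sum_{i=1}^{k} 1\cdot\sqrt{\ell_i}
\;\leq\; \left(\sum_{i=1}^{k} 1^2\right)^{1/2}\left(\sum_{i=1}^{k}\ell_i\right)^{1/2}
\;=\; \sqrt{k}\,\sqrt{n}\;=\;\sqrt{kn},
\]
where the final equality uses $\sum_{i=1}^{k}\ell_i=n$. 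This yields the claim directly. Equivalently, I would invoke concavity of $t\mapsto\sqrt{t}$ and Jensen's inequality to write $\frac{1}{k}\sum_{i=1}^{k}\sqrt{\ell_i}\leq\sqrt{\frac{1}{k}\sum_{i=1}^{k}\ell_i}=\sqrt{n/k}$, and multiply through by $k$ to reach the same conclusion; this route also makes transparent the remark in the text that any separable strictly concave function of the $\ell_i$ would serve equally well.

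There is no serious obstacle here: the statement is a one-line inequality once the constraint $\sum_i\ell_i=n$ is in hand, and the only thing worth flagging is the equality case. Equality in Cauchy--Schwarz holds precisely when all $\ell_i$ are equal, i.e.\ $\ell_i=n/k$ for every $i$; thus the bound $\sqrt{kn}$ is tight exactly for the maximally balanced configurations, which is the regime the potential-function argument will later need to control. I would state the proof in this compact form and move on.
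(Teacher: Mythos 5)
Your proof is correct and essentially the same as the paper's: the Cauchy--Schwarz step is just a restatement of the concavity/Jensen argument the paper uses (indeed you give the Jensen form explicitly), with the $k=0$ case trivially covered since $\psi(X)=0$ then. Nothing further is needed.
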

\begin{proof}
If $k=0$, this is clearly true. Otherwise, if $k\geq 2$, by concavity we have $\psi(X)/k=\frac1k\sum_{i=1}^k\sqrt{\ell_i}\,\leq\,\sqrt{\frac1k\sum_{i=1}^k\ell_i}\,=\,\sqrt{n/k}$,
so $\psi(X)\,\leq\,\sqrt{kn}$.
\end{proof}
Observe that $k(X_{t+1})=k(X_t)$ at step $t$ of either the push or pull process, unless  the change vertex is a singleton, in which case we may have $k(X_{t+1})=k(X_t)-2$. Thus $\{t: k(X_t)=2r\}$ is an  interval $[t_r,t_{r-1})$, which we  will call \emph{phase $r$} of the process.

Let $v_t=v\in D(X_t)$ be the active vertex, i.e. the vertex selected to push in the push rule, or pull in the pull rule. Let $\delta_v$ be the expected change in $\psi$, i.e.
\[ \delta_v\,=\,\E[\,\psi(X_{t+1}) - \psi(X_t)\mid v_t=v\,].\]
If there are $\kappa=2k-s$ discordant vertices,  the total expected change $\delta$  in $\psi$ is
\[ \Delta\,=\,\E[\,\psi(X_{t+1}) - \psi(X_t)]\, =\, \frac{1}{\kappa}\sum_{v\in D}\delta_v.\]
We will show that $\Delta$ is negative, so $\psi(X_t)$ is monotonically decreasing with $t$, in expectation.  Unfortunately we cannot simply bound $\delta_v$ for each $v\in D$, since it is possible to have $\delta_v>0$. Thus we will consider discordant \emph{edges}. We partition the set $K$ of discordant edges $uv$ into three subsets:
\begin{enumerate}[topsep=0pt,itemsep=0pt,label=(\Alph*)]
  \item\quad $A=\{uv : $ \mbox{$u$ and $v$ not singleton}$\}$\/;
  \item\quad $B=\{uv : $ \mbox{$u$ not singleton, $v$ singleton}$\}$\/;
  \item\quad $C=\{uv : $ \mbox{$u$ and $v$ both singleton}$\}$\/.
\end{enumerate}
See Fig.~\ref{cycle:fig02}, where $\ell_z$ is the length of the run containing discordant vertex $z$, for $z\in\{u,v,w,q\}$.

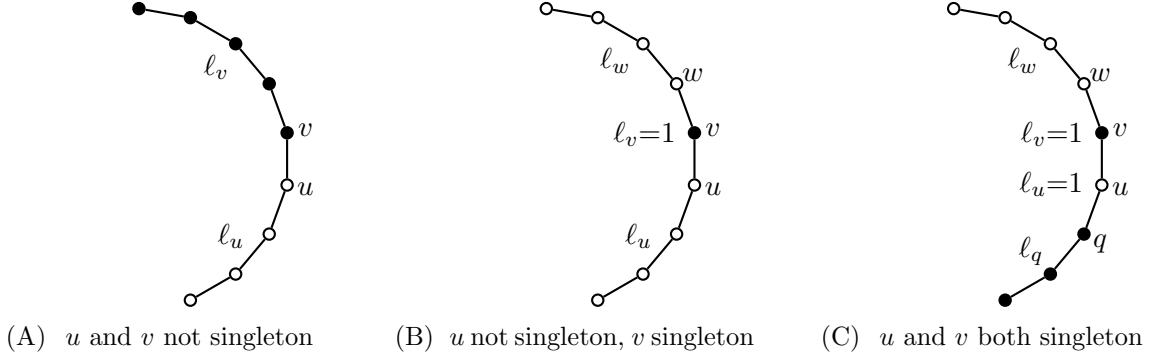
\begin{figure}[H]
\renewcommand{\thesubfigure}{(\Alph{subfigure}) }
\centerline{%
\subfigure[$u$ and $v$ not singleton]{\hspace*{7mm}\begin{tikzpicture}[font=\small]
    \draw  (2.75,0) coordinate (r) (-1.5,0) coordinate (l) ;
    \foreach \x in {-4,-3,-2,-1}
    {\draw (20*\x+10:2) node[bnode] (\x) {};};
    \foreach \x in {1,2,3,4}
    {\draw (20*\x+10:2) node[rnode] (\x) {};};
     \draw (10:2) node[rnode] (0) {};
     \foreach \x in {-4,-3,...,3}
    {\pgfmathtruncatemacro{\y}{\x + 1} \draw (\x) to (\y) ;};
    \draw (50:1.6) node {\black{$\ell_v$}}; \draw (-40:1.6) node {$\ell_u$};
    \draw (-10:2.25) node {$u$}; \draw (10:2.25) node {$v$};
  \end{tikzpicture}}\hspace*{5mm}
\subfigure[$u$ not singleton, $v$ singleton]{\hspace*{5mm}\begin{tikzpicture}[font=\small]
     \draw  (2.75,0) coordinate (r) (-1.5,0) coordinate (l) ;
    \foreach \x in {-4,-3,-2,-1,1,2,3,4}
    {\draw (20*\x+10:2) node[bnode] (\x) {};};
     \foreach \x in {0}
    {\draw (10+20*\x:2) node[rnode] (0) {};};
     \foreach \x in {-4,-3,...,3}
    {\pgfmathtruncatemacro{\y}{\x + 1} \draw (\x) to (\y) ;};
    \draw (55:1.6) node {$\ell_w$}; \draw (-40:1.6) node {$\ell_u$};
    \draw (10:1.9) node[left] {$\ell_v$=1};
    \draw (-10:2.25) node {$u$}; \draw (10:2.25) node {$v$}; \draw (30:2.25) node {$w$};
  \end{tikzpicture}}\hspace*{5mm}
\subfigure[$u$ and $v$ both singleton]{\hspace*{5mm}\begin{tikzpicture}[font=\small]
     \draw  (2.75,0) coordinate (r) (-1.5,0) coordinate (l) ;
    \foreach \x in {-4,-3,-2,0}
    {\draw (20*\x+10:2) node[rnode] (\x) {};};
     \foreach \x in {-1,1,2,3,4}
    {\draw (10+20*\x:2) node[bnode] (\x) {};};
     \foreach \x in {-4,...,3}
    {\pgfmathtruncatemacro{\y}{\x + 1} \draw (\x) to (\y) ;};
    \draw (55:1.6) node {$\ell_w$}; \draw (-50:1.6) node {$\ell_q$};
    \draw (30:2.25) node {$w$};
    \draw (-10:1.9) node[left] {$\ell_u$=1}; \draw (10:1.9) node[left] {$\ell_v$=1};
    \draw (-10:2.25) node {$u$}; \draw (10:2.25) node {$v$};\draw (-30:2.25) node {$q$};
  \end{tikzpicture}}}
  \caption{Cases for discordant edge $u\/v$}\label{cycle:fig02}
\end{figure}
Note that $k$ can change only if $uv\in B\cup C$. Now let
\[ \lambda_{uv}\ =\ \left\{ \begin{array}{ll}
                       \sqrt{\ell_u}+\sqrt{\ell_v}, & uv\in A\,; \\
                       \sqrt{\ell_u}+\frac12\sqrt{\ell_v}, & uv\in B\,; \\
                       \frac12\sqrt{\ell_u}+\frac12\sqrt{\ell_v}, & uv\in C\,.
                     \end{array}\right.\qquad
\delta_{uv}\ =\ \left\{ \begin{array}{ll}
                       \delta_u+\delta_v, & uv\in A\,; \\
                       \delta_u+\frac12\delta_v, & uv\in B\,; \\
                       \frac12\delta_u+\frac12\delta_v, & uv\in C\,.
                     \end{array}\right.\]
Each singleton is in two discordant edges, all other discordant vertices in one, and each run is bounded by two discordant vertices. Therefore
\[\psi\,=\,\tfrac12\sum_{v\in D}\sqrt{\ell_v}\,=\,
\sum_{uv\in K}\lambda_{uv}\,,\qquad\delta\,=\,\frac1\kappa\sum_{v\in D}\delta_{v}\,=\,\frac1\kappa\sum_{uv\in K}\delta_{uv}\,.\]
We will show that $\delta_{uv}<0$ for all $uv\in K$. We consider cases (A), (B) and (C) separately.
So far, the analysis is identical for pull and push voting. Now we must distinguish them. First we consider the push process.

\subsubsection*{Push voting}
\begin{enumerate}[label=(\Alph*)]
  \item
  \begin{eqnarray*}
  \delta_v &=&\sqrt{\ell_v+1}-\sqrt{\ell_v}+\sqrt{\ell_u-1}-\sqrt{\ell_u}, \\ \delta_u &= &\sqrt{\ell_v-1}-\sqrt{\ell_v}+\sqrt{\ell_u+1}-\sqrt{\ell_u}.
  \end{eqnarray*}
  Hence $\delta_{uv}=(\sqrt{\ell_v+1}+\sqrt{\ell_v-1}-2\sqrt{\ell_v})+(\sqrt{\ell_u+1}+\sqrt{\ell_u-1}-2\sqrt{\ell_u})
      \,\leq\, -\frac14(\ell_v^{-3/2}+\ell_u^{-3/2})$, using Lemma~\ref{cycle:lem02}.\label{discord:A}
\begin{lemma}\label{cycle:lem02}
For all $\ell\geq1$, $\sqrt{\ell+1}+\sqrt{\ell-1}\,\leq\, 2\sqrt{\ell}-\frac14\ell^{-3/2}$.
\end{lemma}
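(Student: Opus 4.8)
The plan is to avoid any random-walk or probabilistic content and treat this as a pure inequality about the concave function $\sqrt{\cdot}$, proving it by an elementary rationalization that turns the left-hand side into a single positive fraction whose denominator I can bound from above by exactly $8\ell^{3/2}$.

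First I would rewrite $2\sqrt\ell-\sqrt{\ell+1}-\sqrt{\ell-1}$ as $(\sqrt\ell-\sqrt{\ell+1})+(\sqrt\ell-\sqrt{\ell-1})$ and rationalize each difference by multiplying by its conjugate, giving $\frac{1}{\sqrt\ell+\sqrt{\ell-1}}-\frac{1}{\sqrt\ell+\sqrt{\ell+1}}$. Combining over a common denominator leaves the numerator $\sqrt{\ell+1}-\sqrt{\ell-1}$, which I rationalize once more to $\tfrac{2}{\sqrt{\ell+1}+\sqrt{\ell-1}}$. This yields the closed form
\[
2\sqrt\ell-\sqrt{\ell+1}-\sqrt{\ell-1}=\frac{2}{(\sqrt{\ell+1}+\sqrt{\ell-1})(\sqrt\ell+\sqrt{\ell-1})(\sqrt\ell+\sqrt{\ell+1})}.
\]
This already shows the left-hand side is positive, and reduces the lemma to showing that the denominator is at most $8\ell^{3/2}$.

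Next I would bound the three factors. Concavity of $\sqrt{\cdot}$ gives $\sqrt{\ell+1}+\sqrt{\ell-1}\le 2\sqrt\ell$. For the remaining product I expand $(\sqrt\ell+\sqrt{\ell-1})(\sqrt\ell+\sqrt{\ell+1})=\ell+\sqrt{\ell^2+\ell}+\sqrt{\ell^2-\ell}+\sqrt{\ell^2-1}$ and bound each surd by its linear majorant, $\sqrt{\ell^2+\ell}\le\ell+\tfrac12$, $\sqrt{\ell^2-\ell}\le\ell-\tfrac12$, and $\sqrt{\ell^2-1}\le\ell$, all valid for $\ell\ge1$. These sum to $4\ell$, so the product is at most $2\sqrt\ell\cdot 4\ell=8\ell^{3/2}$, which is exactly what is required.

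The main obstacle is that the inequality is asymptotically tight: with $\phi(x)=\sqrt x$ the left-hand side equals $-\phi''(\ell)=\tfrac14\ell^{-3/2}$ to leading order, so there is no slack in the constant $\tfrac14$ and every intermediate estimate must be chosen so the errors cancel to give the denominator bound exactly $8\ell^{3/2}$ rather than merely $O(\ell^{3/2})$; in particular it is the sharper choice $\sqrt{\ell^2-\ell}\le\ell-\tfrac12$ that makes the $\pm\tfrac12$ terms telescope precisely. I would also verify the endpoint $\ell=1$ directly, where $\sqrt{\ell-1}=0$ and the rationalization degenerates but the bounds remain valid. As a cross-check I could instead use the exact integral representation $\phi(\ell+1)+\phi(\ell-1)-2\phi(\ell)=\int_{-1}^{1}(1-|u|)\phi''(\ell+u)\,du$ together with the concavity of $\phi''$, which gives $\phi''(\ell+u)+\phi''(\ell-u)\le 2\phi''(\ell)$ and hence the constant $\tfrac14$ after integrating the symmetric weight; I would keep the elementary rationalization as the primary argument since it needs no analytic machinery.
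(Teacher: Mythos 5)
Your proof is correct, but it takes a genuinely different route from the paper's. The paper normalizes by setting $x=1/\ell$, reducing the lemma to $\sqrt{1+x}+\sqrt{1-x}\leq 2-\frac14x^2$ for $x\leq 1$, and proves that by squaring twice (the key intermediate step being $\sqrt{1-y}\leq 1-\frac12 y$ with $y=x^2$). You instead rationalize the defect exactly, obtaining the identity $2\sqrt{\ell}-\sqrt{\ell+1}-\sqrt{\ell-1}=2/D$ with $D=(\sqrt{\ell+1}+\sqrt{\ell-1})(\sqrt{\ell}+\sqrt{\ell-1})(\sqrt{\ell}+\sqrt{\ell+1})$, and reduce the lemma to $D\leq 8\ell^{3/2}$; your surd bounds $\sqrt{\ell^2+\ell}\leq\ell+\frac12$, $\sqrt{\ell^2-\ell}\leq\ell-\frac12$ (valid once $\ell\geq\frac12$), $\sqrt{\ell^2-1}\leq\ell$, together with the concavity bound $\sqrt{\ell+1}+\sqrt{\ell-1}\leq 2\sqrt{\ell}$, deliver exactly this, with the $\pm\frac12$ terms telescoping to give $4\ell$ and hence $D\leq 2\sqrt{\ell}\cdot 4\ell=8\ell^{3/2}$. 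I checked each step, including the endpoint $\ell=1$ (where $D=2+\sqrt2\leq 8$), and all are sound. As for what each approach buys: the paper's substitution is more compact and exhibits the lemma as a quantitative sharpening of the concavity of $\sqrt{\cdot}$, at the cost of two squarings whose validity rests on both sides being nonnegative; your identity makes positivity of the defect manifest, gives the exact value of the second difference (which also shows the constant $\frac14$ is asymptotically sharp, since $D\sim 8\ell^{3/2}$), and localizes the whole estimate in one transparent denominator bound. Your secondary cross-check via the integral representation and concavity of $\phi''(x)=-\frac14 x^{-3/2}$ (indeed $\phi''''<0$ on $(0,\infty)$) is also valid, including at $\ell=1$ where the singularity of the integrand at $u=-1$ is integrable, but as you note it is dispensable given the elementary argument.
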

\begin{proof}\vspace{-\topsep}
First, we prove the inequality $\sqrt{1+x}+\sqrt{1-x}\,\leq\, 2-\frac14 x^2$, for all $x\leq1$.
By squaring both sides, the inequality is true if
$2+2\sqrt{1-x^2}\,\leq\, 4-x^2+\tfrac1{16}x^4$.
This is true if $\sqrt{1-y}\,\leq\, 1-\tfrac12y$, with $y=x^2$.
Squaring both sides, this is $1-y^2\,\leq\, 1-y^2+\tfrac14y^4$, which is clearly true.
Now, letting $x=1/\ell$, $\sqrt{\ell+1}+\sqrt{\ell-1}\,\leq\, 2\sqrt{\ell}-\frac14\ell^{-3/2}$ is equivalent to $\sqrt{1+x}+\sqrt{1-x}\,\leq\, 2-\frac14 x^2$ with $x\leq 1$.
\end{proof}
  \item Let $u,w$ be the discordant neighbours of $v$. Then
\[
\delta_v=\frac12(\sqrt{\ell_u-1}-\sqrt{\ell_u}+\sqrt{2}-1+\sqrt{\ell_w-1}
-\sqrt{\ell_w}+\sqrt{2}-1)\]
  Since $\sqrt{\ell-1}\leq \sqrt{\ell}$, $\delta_v\leq \sqrt{2}-1$. Also
\[
\delta_u=\sqrt{\ell_w+\ell_u+1}-\sqrt{\ell_w}-\sqrt{\ell_u}-1\leq \sqrt3-3,
\]
using Lemma~\ref{cycle:lem04}.
      Thus
\[
\delta_{uv}\leq \frac12(\sqrt{2}-1)+\sqrt3-3<-1\,\leq\, -\frac12(\ell_v^{-3/2}+\ell_u^{-3/2}).
\]
\begin{lemma}\label{cycle:lem04}
For all $\ell_1,\ell_2\geq1$, $\sqrt{\ell_1}+\sqrt{\ell_2}+1\,\geq\, \sqrt{\ell_1+\ell_2+1}+(3-\sqrt{3})$.
\end{lemma}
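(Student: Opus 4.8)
The plan is to show that the left-minus-right difference
\[
f(\ell_1,\ell_2)\;=\;\sqrt{\ell_1}+\sqrt{\ell_2}+1-\sqrt{\ell_1+\ell_2+1}
\]
attains its minimum over the region $\ell_1,\ell_2\geq 1$ at the corner $(\ell_1,\ell_2)=(1,1)$, and that the minimum value is exactly $3-\sqrt 3$. Once this is established, the inequality $f(\ell_1,\ell_2)\geq 3-\sqrt3$ is immediate, which is the claim.

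First I would observe that $f$ is monotone increasing in each argument separately. Computing the partial derivative,
\[
\frac{\partial f}{\partial \ell_1}\;=\;\frac{1}{2\sqrt{\ell_1}}-\frac{1}{2\sqrt{\ell_1+\ell_2+1}},
\]
and since $\ell_2+1>0$ gives $\ell_1<\ell_1+\ell_2+1$, hence $\sqrt{\ell_1}<\sqrt{\ell_1+\ell_2+1}$, this derivative is strictly positive throughout the region. The same argument applied to $\ell_2$ (using $\ell_1+1>0$) shows $\partial f/\partial\ell_2>0$. Therefore $f$ is strictly increasing in both coordinates on $[1,\infty)^2$.

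Consequently the infimum of $f$ over the closed region $\{\ell_1\geq1,\,\ell_2\geq1\}$ is achieved at the corner point $\ell_1=\ell_2=1$. Evaluating there,
\[
f(1,1)\;=\;\sqrt1+\sqrt1+1-\sqrt{1+1+1}\;=\;3-\sqrt3,
\]
so $f(\ell_1,\ell_2)\geq 3-\sqrt3$ for all $\ell_1,\ell_2\geq1$, which rearranges to the stated inequality.

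There is essentially no hard step here: the whole argument rests on the elementary monotonicity observation, and the only point requiring care is verifying that both partial derivatives are positive on the entire region (not merely at a point), which follows at once from the strict inequalities $\ell_1,\ell_2<\ell_1+\ell_2+1$. If one prefers to avoid calculus, the same conclusion can be reached by fixing the sum $\ell_1+\ell_2$ and using concavity of $\sqrt{\cdot}$ to reduce to the boundary, then minimizing the resulting one-variable expression; but the monotonicity route is cleaner and I would present that.
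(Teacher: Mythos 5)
Your proof is correct and follows essentially the same route as the paper: the paper also computes $\partial f/\partial\ell_i = \frac{1}{2\sqrt{\ell_i}} - \frac{1}{2\sqrt{\ell_1+\ell_2+1}} > 0$ and concludes the minimum occurs at $(1,1)$, where the value is $3-\sqrt{3}$ (the paper merely absorbs the constant $\sqrt{3}-3$ into its definition of $f$ so that $f(1,1)=0$). No differences worth noting.
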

\begin{proof}\vspace{-\topsep}
Consider $f(\ell_1,\ell_2)=\sqrt{\ell_1}+\sqrt{\ell_2}+1-\sqrt{\ell_1+\ell_2+1}+(\sqrt3-3)$. Then, for all $\ell_1,\ell_2>0$,
\[ \frac{\partial f}{\partial \ell_i}\,=\, \frac{1}{2\sqrt{\ell_i}}-\frac{1}{2\sqrt{\ell_1+\ell_2+1}}\,>\,0
\qquad(i=1,2)\,.\]
Hence $f(\ell_1,\ell_2)\geq f(1,1)=0$ for all $\ell_1,\ell_2\geq 1$.
\end{proof}\label{discord:B}
\item
Let $u,w$ be the discordant neighbours of $v$, and $v,q$ the discordant neighbours of $u$. Then
\[
\delta_v=\frac12(\sqrt{\ell_w-1} -\sqrt{\ell_w}+\sqrt{2}
-1+\sqrt{\ell_q+2}-\sqrt{\ell_q}-2).
\]
  Now $\sqrt{\ell-1}\leq \sqrt{\ell}$ and $\sqrt{\ell+2}-\sqrt{\ell}-2\leq \sqrt3-3$, using Lemma~\ref{cycle:lem04} with $\ell_1=1$. Thus $\delta_v\leq \frac12(\sqrt{2}-1+\sqrt3-3)<-0.425$. Similarly $\delta_u<-0.425$, so $\delta_{uv}<-0.425\,\leq\, -\frac15(\ell_v^{-3/2}+\ell_u^{-3/2})$.\label{discord:C}
\end{enumerate}
Hence we have $\delta_{uv}<-\frac15(\ell_v^{-3/2}+\ell_u^{-3/2})$ for all $uv\in K$, so
\[ \delta\,=\,\frac1\kappa\sum_{v\in D}\delta_{v}\,=\,\frac1\kappa\sum_{uv\in K}\delta_{uv}\,\leq\,-\frac{1}{5\kappa}
\sum_{uv\in K}(\ell_v^{-3/2}+\ell_u^{-3/2})\,<\,-\frac{1}{5\kappa}
\sum_{v\in D}\ell_v^{-3/2}\,.\]

Thus
\[ \E[\psi(X_{t+1})]\,<\,\psi(X_{t})-\frac{1}{5\kappa}
\sum_{v\in D}\ell_v^{-3/2}\,.\]
Since $f(x)=x^{-3}$ is a convex function, $\E[f(X)] \ge f(\E[X])$ by Jensen's inequality~\cite[6.6]{Williams}, so
\[\frac{1}{\kappa}\sum_{v\in D}\ell_v^{-3/2}\,\geq\,\Big(\frac{1}{\kappa}\sum_{v\in D}\sqrt{\ell_v}\Big)^{-3}\,
=\,\Big(\frac{\kappa}{2\psi(X_t)}\Big)^{3}\,\geq\,\Big(\frac{k}{2\psi(X_t)}\Big)^{3}\,,\]
Therefore,
\begin{equation}\label{not15} \E[\psi(X_{t+1})]\,<\,\psi(X_t)-\frac{1}{5}\Big(\frac{k}{2\psi(X_t)}\Big)^{3}\,=\,
\psi(X_t)-\frac{k^3}{40\hspace{0.5pt}\psi(X_t)^{3}}\,.
\end{equation}
Hence, using Lemma~\ref{cycle:lem01},
\begin{equation}\label{cycle:eq01}
\E[\psi(X_{t+1})]-\E[\psi(X_t)]\,\leq\,-\tfrac1{40}k^3/(kn)^{3/2}
\,=\,-\tfrac1{40}(k/n)^{3/2}\,.
\end{equation}
Recall that phase $r$ of the process, during which the number of runs is $k=2r$, is the interval $[t_r,t_{r-1})$, for $r\in[r_0]$, and let $\varphi_r=\E[\psi(X_{t_r})]$. Since $r_0=\frac12k(X_0)$, $t_{r_0}=0$ and, since $r(X_T)=k(X_T)=0$, $t_{0}=T$ and $\varphi_0=0$. Let $m_r=\E[t_{r-1}-t_r]$, for $r\in[r_0]$ and $\gamma_r=\tfrac1{40}m_r(2r/n)^{3/2}$. Then~\eqref{cycle:lem01} implies that $\psi(X_t)+(t-t_{r-1})\gamma_r$ is a supermartingale~\cite[10.3]{Williams} during phase $r$, and $t_r$ is a stopping time. Then the optional stopping theorem~\cite[10.10]{Williams} implies that
\[ \varphi_{r-1}+\gamma_rm_r\,=\,\E[\psi(X_{t_{r-1}})+\gamma_r(t_r-t_{r-1})]\,\leq\,\E[\psi(X_{t_r})]\,=\,\varphi_r\,,\]
which implies
\begin{equation}\label{cycle:eq02}
 \varphi_r-\varphi_{r-1}\,\geq\,\gamma_r m_r\,=\, \tfrac1{40}m_r(2r/n)^{3/2}\qquad(r\in[r_0])\,.
\end{equation}
Note, in particular, that $\varphi_r\geq\varphi_{r-1}$ for all $r\in[r_0]$.

From Lemma~\ref{cycle:lem01}, $\varphi_r\leq\sqrt{2rn}$. Then, from \eqref{cycle:eq02}, we have $m_r\leq 40\sqrt{2rn}(2r/n)^{-3/2}=20n^2/r$. Thus
\[ \E[T]\,=\,\sum_{j=1}^{r_0}m_j\,\leq\, 20n^2\sum_{j=1}^{r_0}1/j\,<\,20n^2(\ln r_0+1)\,. \]
Since $r_0\leq n/2$, this gives an absolute bound of $20n^2\ln(e n/2)=O(n^2\log n)$. However, we can improve this with a more careful analysis.

Let $x_r=\varphi_r-\varphi_{r-1}\geq0$, for $r\in[r_0]$, so $\varphi_r=\sum_{i=1}^{r}x_j\leq\sqrt{2rn}$. Also, from~\eqref{cycle:eq02}, we have $m_r\leq 40x_r(n/2r)^{3/2}=10\sqrt2\, n^{3/2}x_r/r^{3/2}$, so $\E[T]=\sum_{j=1}^{r_0}m_j< 10\sqrt2\, n^{3/2}\sum_{j=1}^{r_0}x_r/r^{3/2}$.

Thus $E[T]$ is bounded above by $T^\star$, the optimal value of the following linear program.
\begin{equation}\label{cycle:eq03}
  \begin{array}{r@{\ }l@{\ }ll}
     \ T^\star &\,=\,&\max\ 10\sqrt2\, n^2\sum_{r=1}^{r_0}x_r/r^{3/2} &  \\[1.5ex]
    \mbox{such that\ \,}  \sum_{j=1}^{r}x_j\,&\leq&\, \sqrt{2rn}& (r\in[r_0])\\[1ex]
    \qquad x_j\,&\geq&0\,&(j\in[r_0])\,.
    \end{array}
\end{equation}
This linear program can be solved easily by a greedy procedure. In fact, it is a polymatroidal linear program~\cite{DW73}, but we will give a self-contained proof for this simple case, using linear programming duality.
\begin{lemma}
Let $0<b_1<b_2<\cdots<b_\nu$ and $c_1>c_2>\cdots>c_\nu>0$. Then the linear program $\max \sum_{j=1}^\nu c_jx_j$ subject to $\sum_{j=1}^rx_j\leq b_r$, $x_r\geq 0$ $(r\in[\nu])$ has optimal solution $x_1=b_1$, $x_j=b_j-b_{j-1}$ ($j=2,3,\ldots,\nu$).
\end{lemma}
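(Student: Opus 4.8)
The plan is to prove optimality by \emph{summation by parts} (Abel summation), which makes the role of the monotonicity hypotheses completely transparent and avoids invoking the general polymatroid or LP-duality machinery. Write $x^\star$ for the proposed solution, $x_1^\star=b_1$ and $x_j^\star=b_j-b_{j-1}$ for $j\ge 2$.

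First I would check feasibility. Since $b_1<b_2<\cdots<b_\nu$, each $x_j^\star>0$, and the partial sums telescope to $\sum_{j=1}^r x_j^\star=b_r$, so every constraint is satisfied, in fact with equality. Next, for an \emph{arbitrary} feasible $x$, set $S_r=\sum_{j=1}^r x_j$ (with $S_0=0$), so that feasibility reads $0\le S_r\le b_r$ and $x_j=S_j-S_{j-1}$. Apply Abel summation:
\[
\sum_{j=1}^\nu c_j x_j=\sum_{j=1}^\nu c_j(S_j-S_{j-1})=c_\nu S_\nu+\sum_{j=1}^{\nu-1}(c_j-c_{j+1})S_j.
\]
Because $c_\nu>0$ and $c_j-c_{j+1}>0$ for every $j<\nu$, every coefficient multiplying an $S_j$ is strictly positive; hence replacing each $S_j$ by its upper bound $b_j$ can only increase the right-hand side, giving
\[
\sum_{j=1}^\nu c_j x_j\le c_\nu b_\nu+\sum_{j=1}^{\nu-1}(c_j-c_{j+1})b_j.
\]
Finally, this upper bound is attained precisely when $S_j=b_j$ for all $j$, which is exactly the solution $x^\star$; therefore $x^\star$ is optimal.

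There is essentially no obstacle here; the only things to get right are the bookkeeping in the Abel summation (the boundary term $c_\nu S_\nu$ and the index shift in the second sum) and the observation that the strict inequalities $c_j>c_{j+1}$ are exactly what force the Abel coefficients to be positive, so that all the constraints can be driven to equality \emph{simultaneously} by a single feasible point. One could alternatively phrase the identical computation as LP duality: setting $y_\nu=c_\nu$ and $y_j=c_j-c_{j+1}$ for $j<\nu$ yields a nonnegative (hence feasible) dual solution whose objective $\sum_{r} b_r y_r$ coincides with the primal value of $x^\star$, certifying optimality by weak duality. I would present the Abel-summation version as the self-contained proof and mention the duality reading as a remark.
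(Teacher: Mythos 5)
Your proof is correct, and it is a primal, self-contained repackaging of the paper's argument rather than a wholly different one. The paper proves the lemma by exhibiting the dual feasible point $y_\nu=c_\nu$, $y_j=c_j-c_{j+1}$ $(j<\nu)$ and checking that the dual objective $b_\nu c_\nu+\sum_{j=1}^{\nu-1}(c_j-c_{j+1})b_j$ equals the primal value of the proposed solution, concluding by weak duality; your Abel coefficients are exactly these dual variables, as you yourself note in your closing remark. The genuine difference is in what is taken as given: the paper invokes LP duality (equal primal and dual objective values certify optimality), whereas you prove the required inequality $\sum_j c_jx_j\le c_\nu b_\nu+\sum_{j=1}^{\nu-1}(c_j-c_{j+1})b_j$ directly for \emph{every} feasible $x$ by summation by parts, using $S_0=0$, $S_j\le b_j$, and the strict positivity of the coefficients $c_\nu$ and $c_j-c_{j+1}$ --- in effect re-deriving weak duality inline, so no LP theory is needed. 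Your version buys two small things: it is elementary (no duality theorem cited), and because all Abel coefficients are strictly positive it shows the bound is attained \emph{only} when $S_j=b_j$ for all $j$, i.e.\ the optimum is unique, which the paper's argument does not record. Both arguments are equally rigorous; your bookkeeping (the boundary term $c_\nu S_\nu$, the vanishing $c_1S_0$ term, and the telescoping feasibility check for $x^\star$) is all in order.
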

\begin{proof}
This solution has objective function value $c_1b_1+c_2(b_2-b_1)+\cdots+c_\nu(b_\nu-b_{\nu-1})$.
The dual linear program is $\min \sum_{i=1}^\nu b_iy_i$ subject to $\sum_{i=j}^\nu y_i\geq c_j$, $y_j\geq 0$ $(j\in[\nu])$, and has feasible solution $y_\nu=c_\nu$, $y_j=c_j-c_{j+1}$  ($j\in[\nu-1]$). Then the dual objective function has value $b_\nu c_\nu+b_{\nu-1}(c_{\nu-1}-c_\nu)+\cdots+b_1(c_1-c_2)$. However,
\[ c_1b_1+c_2(b_2-b_1)+\cdots+c_\nu(b_\nu-b_{\nu-1})\,=\,b_\nu c_\nu+b_{\nu-1}(c_{\nu-1}-c_\nu)+\cdots+b_1(c_1-c_2)\,. \]
Since the objective function values are equal, it follows that the two solutions are optimal in the primal and dual respectively.
\end{proof}

Thus, the optimal solution to~\eqref{cycle:eq03} is $x_r=\sqrt{2r}-\sqrt{2(r-1)}=\sqrt{2r}(1-\sqrt{1-1/r})\leq \sqrt{2/r}$, for $r\in[r_0]$, since $1-y\leq\sqrt{1-y}$ for $0\leq y\leq 1$. Thus
\[ T^\star\,\leq\, 10\sqrt2\, n^2\sum_{j=1}^{r_0}x_r/r^{3/2}\,\leq\,10\sqrt2\, n^2\sum_{j=1}^{r_0}\sqrt2/\big(\sqrt{r}\, r^{3/2}\big)\,=\, 20\onept n^2\sum_{r=1}^{r_0}1/r^{2}\,<\,(10\pi^2/3)n^2\,,\]
since $\sum_{r=1}^{\infty}1/r^{2}=\pi^2/6$. Thus we have an absolute bound of $\E[T]=O(n^2)$.

\subsubsection*{Pull voting}
The case of pull voting is similar, but the calculations for cases \ref{discord:A}--\ref{discord:C} are changed as follows.
\begin{enumerate}[label=(\Alph*${}'$)]
  \item The analysis for this case is identical to \ref{discord:A}, except that $\delta_u$ and $\delta_v$ are interchanged. Hence $\delta_{uv}
      \,\leq\, -\frac14(\ell_v^{-3/2}+\ell_u^{-3/2})$, as before.\label{discord:A'}
  \item $\delta_v=\sqrt{\ell_u+\ell_w+1}-\sqrt{\ell_u}-\sqrt{\ell_w}-1\leq \sqrt3-3$, using Lemma~\ref{cycle:lem04}.  Also $\delta_u=\sqrt2+\sqrt{\ell_u-1}-\sqrt{\ell_u}-1\leq \sqrt2-1$.
      Thus $\delta_{uv}\leq \sqrt{2}-1+\frac12(\sqrt3-3)<-0.22\,\leq\, -\frac1{10}(\ell_v^{-3/2}+\ell_u^{-3/2})$.\label{discord:B'}
  \item $\delta_v=\sqrt{\ell_w+2}-\sqrt{\ell_w}-2<\sqrt3-3$, from Lemma~\ref{cycle:lem04} with $\ell=1$. Similarly $\delta_u<\sqrt3-3$, so $\delta_{uv}\leq\sqrt3-3\,<\,-1.25\,<\,-\frac12 (\ell_v^{-3/2}+\ell_u^{-3/2})$.\label{discord:C'}
\end{enumerate}
Hence we have $\delta_{uv}<-\frac1{10}(\ell_v^{-3/2}+\ell_u^{-3/2})$ for all $uv\in K$, whereas we had $\delta_{uv}<-\frac1{5}(\ell_v^{-3/2}+\ell_u^{-3/2})$ for push voting. Thus the estimated rate of convergence is only half that for push voting. The rest of the analysis follows the same lines as before, except that the convergence time estimates are doubled. However, we may still conclude that $\E[T]=O(n^2)$.
\subsubsection*{Lower bound}
Suppose $G$ is an $n$-cycle, with $n=2\nu$ even, and the push or pull process starts with $X_0(i)=0$ $(i=1,\ldots,\nu)$, $X_0(i)=1$ $(i=\nu+1,\ldots,n)$. Thus $k=2$ and $\ell_1=\ell_2=\nu$. See Fig.~\ref{cycle:fig03}. At each step before convergence, there are two discordant edges, four discordant vertices, and the push and pull processes proceed identically.
\begin{figure}[H]
\centerline{%
    \begin{tikzpicture}[rotate=70,font=\scriptsize]
    \foreach \x in {1,...,9}
    {\draw (20*\x+10:2cm) node[bnode] (\x) {};}
     \foreach \x in {10,...,18}
    {\draw (20*\x+10:2cm) node[rnode] (\x) {};}
    \draw (18) to (1); \foreach \x in {1,...,17}
    {\pgfmathtruncatemacro{\y}{\x + 1} \draw (\x) to (\y) ;};
  \end{tikzpicture}}
  \caption{Lower bound configuration}\label{cycle:fig03}
\end{figure}
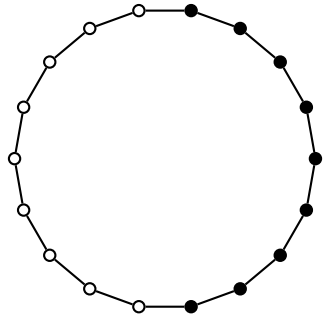
Let $L_t$ be the length of (say) the red run at step $t$, so $L_0=\nu$, $L_T\in\{0,n\}$.  At each step before convergence,  we have $k(X_t)=2$, $L_{t+1}\gets L_t-1$ with probability $\nicefrac12$, and $L_{t+1}\gets L_t+1$ with probability $\nicefrac12$. Thus $L_t$ is a symmetric simple random walk. The number of runs $k(X_t)$ can only be reduced from two to zero if either $L_t=1$ or $L_t=n-1$, when one of the runs is a singleton. Thus $\E[T]$ is bounded below by the expected time for a symmetric simple random walk started at $\nu$ to reach either $1$ or $(n-1)$. This is well known~\cite[XIV.3]{Feller}, and is exactly $(\nu-1)^2=\Omega(n^2)$. Therefore the expected convergence time for either the push or pull process is $\Theta(n^2)$.


\section{Voting on the star graph $S_n$}

Let $(r,b,X)$ denote the coloring of the star graph $S_n$ on $n$ vertices
in which there are $r$ red vertices $b=n-r$ blue vertices. The central vertex has colour $X \in\{R,B\}$.

\subsection*{Push voting on the star}

In the case of the push process, the transitions from state $(r,b,R)$ are
to state $(r+1,b-1,R)$ with probability $1/(b+1)$ and to state $(r-1,b+1,B)$
with probability $b/(b+1)$. The transitions from state $(r-1,b+1,B)$ are to $(r,b,R)$ with probability $(r-1)/r$ and to $(r-2,b+2,B)$ with probability $1/r$.
For the purposes of discussion we group the states $(r,R)=(r,b,R)$ and $(r-1,B)=(r-1,b+1,B)$ into a single pseudo-state $S(r)$. The  transitions probabilities  within or between $S(r+1)$ or $S(r-1)$ are shown in Figure \ref{Star-push}, and are derived as follows:

\begin{figure}
\begin{center}
\begin{tikzpicture}[state/.style={rounded rectangle,minimum size=5mm,
minimum width=3cm,very thick,draw=black!50},xscale=0.8,yscale=0.9]
\draw (-6,1) node (U1) [state] {$r+1,b-1,R$};\draw (-6,-1) node (L1) [state] {$r,\,b,\,B$};
\draw (0,1) node (U0) [state] {$r,\,b,\,R$};\draw (0,-1) node (L0) [state] {$r-1,b+1,B$};
\draw (6,1) node (U-1) [state] {$r-1,b+1,R$};\draw (6,-1) node (L-1) [state] {$r-2,b+2,B$};
\draw[darkgray,very thick] (-9,1)edge[<-](U1) (U1)edge[<-](U0) (U0)edge[<-](U-1) (9,1)edge[->](U-1);
\draw[darkgray,very thick]  (-9,-1)edge[->](L1) (L0)edge[<-](L1) (L-1)edge[<-](L0)  (9,-1)edge[<-](L-1);
\draw (-8,-2) rectangle (-4,2) (-2,-2) rectangle (2,2) (4,-2) rectangle (8,2);
\draw (-6,2.4) node{\large$S(r+1)$} (0,2.4) node{\large$S(r)$} (6,2.4) node{\large$S(r-1)$} ;
\draw[very thick] (U1)edge[darkgray,->,bend right=30](L1) (L1)edge[darkgray,->,bend right=30](U1) ;
\draw[very thick] (U0)edge[darkgray,<-,bend right=30](L0) (0.85,0) node{$\frac{b}{b+1}$}
(L0)edge[darkgray,<-,bend right=30](U0) (-0.85,0) node{$\frac{r-1}{r}$};
\draw[very thick] (U-1)edge[darkgray,->,bend right=30](L-1) (L-1)edge[darkgray,->,bend right=30](U-1);
\draw (-3,1.4) node {$\frac{1}{b+1}$} (-3,-0.6) node {$\frac1{r+1}$};
\draw (3,1.4) node {$\frac1{b+2}$} (3,-0.6) node {$\frac1r$};
\end{tikzpicture}
\end{center}
\caption{Pseudo-states for the push process}\label{Star-push}
\end{figure}
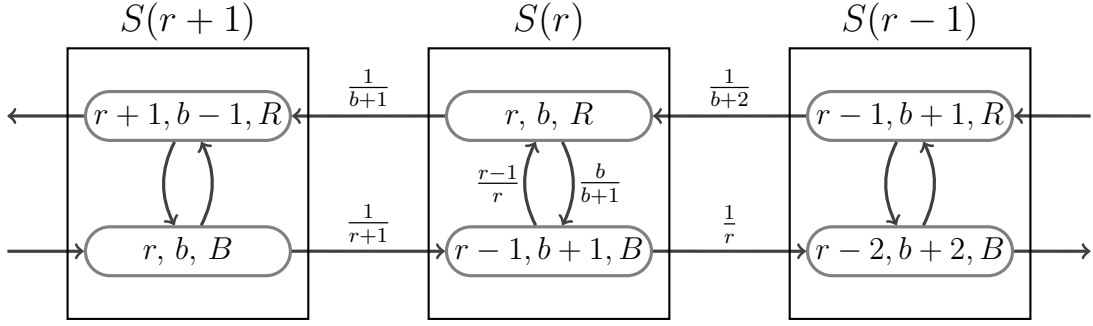

 Let $X, Y \in \{ R,B\}$. For a particle occupying a state (of colour) $X$ in $S(r)$ let $P_X(Y,r)$ be the probability of exit from $S(r)$ via state $Y$.
 For example $P_R(R,r)$ is the probability that a particle starting  at $(r,R)$ eventually exits from $S(r)$ via state $(r,R)$ to state $(r+1,R)$ in $S(r+1)$. Thus
\[
P_R(R,r)=  \frac{1}{b+1}
\brac{ 1+ \frac{b}{b+1}\frac{r-1}{r}+\cdots+\brac{\frac{b}{b+1}\frac{r-1}{r}}^k+\cdots},
\]
so that
\[
P_R(R,r)=\frac{1}{b+1} \frac{1}{1-[b(r-1)/(b+1)r]}=\frac{r}{n}.
\]
Similarly let
 $P_B(R,r)$ be the probability that a particle currently at $(r-1,B)$ in $S(r)$ moves from  $S(r)$ to $(r+1,R)$ in $S(r+1)$. Then
\[
P_B(R,r)= \frac{r-1}{r}P_R(R,r)=\frac{r-1}{n}.
\]
In summary, starting from state $X \in \{R,B\}$ of $S(r)$, for $1 \le r \le  n-1$ the transition probability $p_X(r)$ from $S(r)$ to $S(r+1)$ (resp.  transition probability $p_X(b)$ from $S(r)$ to $S(r-1)$) is given by
\begin{equation}\label{eqn:p_idef2}
p_X(r) =
\frac{r-1_{(X=B)}}{n}, \qquad p_X(b) =
\frac{b+1_{(X=B)}}{n}.
\end{equation}
States $(0,B)$ (i.e. $S(0)$) and  $(n,R)$ (i.e. $S(n)$) are absorbing.

Let $i=\max(r,b)-n/2$. To obtain lower and upper bounds
on  the number of transitions {\em between pseudo-states $S(r)$ before absorption}, we can couple the process with a biassed random walk on the
line $L=\{0,1,...,n/2\}$ with a reflecting barrier at $0$ and an absorbing barrier at $n/2$. We assume $n$ is even here. For $0 < i < n/2$, let $p_i$ be the  probability
 of a transition from  $i$ to $i+1$ on $L$, and let $q_i=1-p_i$ be the  probability
 of a transition from $i$ to $i-1$.
It follows from \eqref{eqn:p_idef2} that to obtain  bounds on  the number of transitions  between pseudo-states $S(r)$ before absorption we can use a value of  $p_i$ given by
\beq{pidef}
p_i=1/2+(i+1)/n \quad \text{Lower bound}, \qquad p_i=1/2 +(i-1)/n \quad \text{Upper bound}.
\eeq

We next consider the number of loops,  for example $(r,R)\to (r-1,B)  \to (r,R)$,
  made within $S(r)$ before exit. For a particle starting from state $X$ of $S(r)$ let $C_{XY}=C_{XY}(r)$ be the number of loops before exit at state $Y$. Let $\l=\frac{b}{b+1} \frac{r-1}{r}$  and $\r=\l/(1-\l)^2$, then
\[
\E C_{RR}= \sum_{k \ge 0} \frac{1}{b+1} k \l^k= \frac{1}{b+1}\frac{\l}{(1-\l)^2}=
\r \frac{1}{b+1}.
\]
Similarly,
\[
\E C_{BR}=\r\frac{r-1}{r(b+1)}, \qquad \E C_{RB}= \r \frac{b}{r(b+1)}, \qquad \E C_{BB}= \r \frac{1}{r}.
\]
 The conditional expectations $\mu_{XY}(r)=\E C_{XY}(r)/P_X(Y,r)$ are given by
\beq{C}
\mu_{XY}(r)=
\begin{cases}
\r \frac{n}{r}\frac{1}{b+1},& XY=RR\\
\r \frac{n}{r}\frac{1}{b+1},& XY=BR\\
\r  \frac{n}{n-r} \frac{b}{r(b+1)},& XY=RB\\
\r \frac{n}{n-r+1} \frac{1}{r},& XY=BB
\end{cases}
.
\eeq
The value of $\r=(rb(r-1)(b+1))/n^2$.
In particular if $b,r= \ooi n/2$ then, whatever colours $X,Y$
\beq{muval}
\mu_{XY}(r)= \ooi \frac{n}{4}.
\eeq

Let $N=n/2$. Starting from $r=b=n/2$
let $T'_N$ be the  number of transitions between states $S(r)$ to reach $\max(r,b)=N+n/2$. Referring to \eqref{pidef}, we consider a biassed random walk with transition probabilities of $Z = \max\{r,b\}-n/2$  given by
\begin{equation}\label{pdef}
p_i =
\begin{cases}
1, & \text{if }i = 0 \\
1/2+i/n+\d/n, & \text{if }i \in \{1, \ldots, n/2-1\} \\
0 , & \text{if }i = n/2
\end{cases}
,
\end{equation}
where we set $\d=1$ for a lower bound  on the number of steps $T'$ to absorption,
and $\d=-1$ for an upper bound.

The walk in \eqref{pdef} is the push chain $Z_t$  with transitions given
by \eqref{eqn:p_idef} as analysed Section \ref{BDC}.
Referring to \eqref{eqn:p_idef} and \eqref{ETM} we set  $\d=0$ for a lower bound on
$\E_0T_M$. For  $M=N^{3/4}$, from Lemma \ref{qwik},
\[
\E_0T_M \ge \Th(1) \sum_{i= \sqrt N}^M \frac Ni \ge \Th(N) \log \frac{M}{\sqrt N}
=\Th (n \log n).
\]
For all states $i=\sqrt N,...,N^{3/4}$, the corresponding value of $r=\ooi n/2$.
Referring to \eqref{muval}, whatever the type of transition $XY$ between $S(r)$ and neighbouring states, $\mu_{XY}(r) =\ooi n/4$. Let $\mu = \min_{X,Y}(\mu_{XY}(r): n/2 \le r \le M)$, then $\mu \ge n/5$.
As $\E_0T_N \ge \E_0T_M=\Th(n \log n)$ we have that
\[
\E T(\text{\em Push})\ge \mu\; \E_0 T_M  =\Om (n^2 \log n).
\]
The upper bound follows by a similar argument. Put $\d=-1$ in \eqref{eqn:p_idef},
and use Lemma \ref{slow}.

\subsection*{Pull voting on the star}

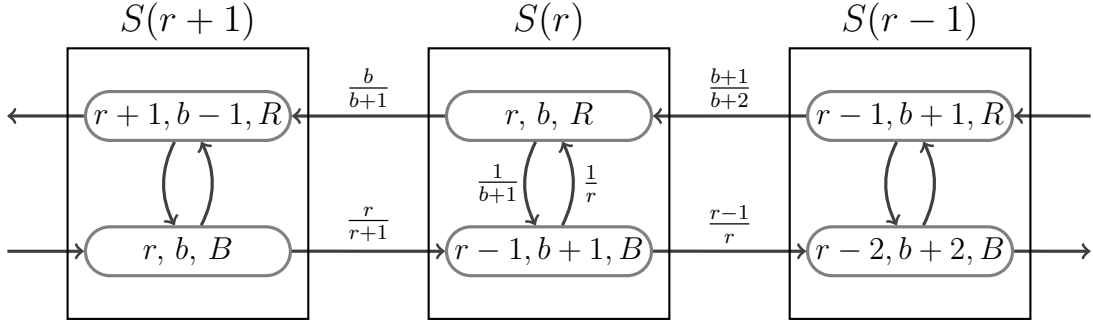
\begin{figure}
\begin{center}
\begin{tikzpicture}[state/.style={rounded rectangle,minimum size=5mm,
minimum width=3cm,very thick,draw=black!50},xscale=0.8,yscale=0.9]
\draw (-6,1) node (U1) [state] {$r+1,b-1,R$};\draw (-6,-1) node (L1) [state] {$r,\,b,\,B$};
\draw (0,1) node (U0) [state] {$r,\,b,\,R$};\draw (0,-1) node (L0) [state] {$r-1,b+1,B$};
\draw (6,1) node (U-1) [state] {$r-1,b+1,R$};\draw (6,-1) node (L-1) [state] {$r-2,b+2,B$};
\draw[darkgray,very thick]  (-9,1)edge[<-](U1) (U1)edge[<-](U0) (U0)edge[<-](U-1) (9,1)edge[->](U-1);
\draw[darkgray,very thick] (-9,-1)edge[->](L1) (L0)edge[<-](L1) (L-1)edge[<-](L0)  (9,-1)edge[<-](L-1);
\draw (-8,-2) rectangle (-4,2) (-2,-2) rectangle (2,2) (4,-2) rectangle (8,2);
\draw (-6,2.4) node{\large$S(r+1)$} (0,2.4) node{\large$S(r)$} (6,2.4) node{\large$S(r-1)$} ;
\draw[very thick] (U1)edge[darkgray,->,bend right=30](L1) (L1)edge[darkgray,->,bend right=30](U1) ;
\draw[very thick] (U0)edge[darkgray,->,bend right=30](L0) (0.7,0) node{$\frac{1}{r}$}
(L0)edge[darkgray,->,bend right=30](U0) (-0.85,0) node{$\frac{1}{b+1}$};
\draw[very thick] (U-1)edge[darkgray,->,bend right=30](L-1) (L-1)edge[darkgray,->,bend right=30](U-1);
\draw (-3,1.4) node {$\frac{b}{b+1}$} (-3,-0.6) node {$\frac{r}{r+1}$};
\draw (3,1.4) node {$\frac{b+1}{b+2}$} (3,-0.6) node {$\frac{r-1}{r}$};
\end{tikzpicture}
\end{center}
\caption{Pseudo-states for the pull process}\label{Star-pull}
\end{figure}
 As before, we group the states $(r,R)=(r,b,R)$ and $(r-1,B)=(r-1,b+1,B)$ into a single pseudo-state $S(r)$. The  transitions probabilities  within or between $S(r+1)$ or $S(r-1)$ are shown in Figure \ref{Star-pull}, and are obtained by calculations similar to the push case.
  In the final pseudo-state $S(n)$ on the left, the state $(n,0,R)$ is absorbing, and so the state $(n-1,1,B)$ cannot be reached. As an initial state, $(n-1,1,B)$ goes to $(n-2,2,B)$ with probability 1.

The pull process seems much easier to analyse.
Suppose the star currently has a red central vertex, and we are in  state $(r,b,R)$ of  $S(r)$.
The probability of
a direct transition from $(r,b,R)$ to $(r+1,b-1,R)$ is $b/(b+1)$.
This occurs when a blue leaf vertex is chosen and pulls the colour of the red
central vertex. We say  a {\em run} is  a sequence of  transitions which leave the colour of the central vertex unchanged.
Let $\r(r,x, R)$ be run given by the sequence of  transitions
\[
(r,b,R)\to (r+1,b-1,R) \to \cdots \to (x-1,n-x+1,R) \to (x,n-x,R).
\]
Then
\[
\Pr(\r(r,x,R))= \frac{n-r}{n-r+1} \frac{n-r-1}{n-r} \cdots \frac{n-x+1}{n-x+2}
= \frac{n-x+1}{n-r+1}.
\]
The probability a run starting at $(r,n-r,R)$ run finishes by absorption at $(n,0,R)$ is
\[
\Pr(\r(r,n,R))=\frac{1}{n-r+1} \ge \frac 1n.
\]
Each run is terminated by absorption, or by a change of colour of the central vertex, say from $R$ to $B$. In the latter case, this marks the start of a new run (possibly of length zero) in the opposite direction.
Starting from $(r,n-r,R)$, let $X$ be the  number of changes of colour of the central vertex from $R$ to $B$, or vice versa,
 before  absorbtion at  $(n,0,R)$ or $(0,n,B)$. Let $Y$ be the winning step for a sequence of independent trials with success probability $p=1/n$.
Then $\E X\le \E Y =n$. Each run  has a length between zero and  $n$, so $\E T(\text{\em Pull})=O(n^2)$.

\section{Voting on the double star}

\subsection*{Push voting on the double star}
A \emph{double star} $S_{2n+2}^\star$ comprises two stars $S_1,S_2$, each with $n$ leaves, and their central vertices $c_1,c_2$ joined by an edge.  Let $X_t:V\to\{R,B\}$ identify the colours of the vertices $v\in V$ at time $t$. See Fig.~\ref{double:fig01}. We will show that the convergence time for the push process on $S_{2n+2}^\star$ can be exponential in $n$.
\begin{theorem}
The push process on the double star with $2n+2$ vertices has worst case convergence time $\Omega(2^{2n/5})$.
\end{theorem}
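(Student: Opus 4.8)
The plan is to show that the initial colouring is a \emph{stable disagreement}: the star $S_1$ wants to stay (almost) all red and $S_2$ wants to stay (almost) all blue, so that reaching either monochromatic absorbing state forces one of the two stars to be dragged across its balance point against a constant multiplicative drift, which costs exponential time. I would parametrise a configuration by $(r_1,c_1,r_2,c_2)$, where $r_j$ is the number of red leaves of $S_j$, $c_j\in\{R,B\}$ is the colour of the centre, and $b_j=n-r_j$. The map swapping $S_1\leftrightarrow S_2$ and $R\leftrightarrow B$ is a measure-preserving automorphism of the push chain fixing the configuration of Fig.~\ref{double:fig01}, and it carries $\{r_2\ge n/5\}$ to $\{r_1\le 4n/5\}$. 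Since all-red consensus requires $r_2$ to climb from $0$ to $n$ (hence past $n/5$) and all-blue consensus requires $r_1$ to fall from $n$ to $0$ (hence below $4n/5$), we have $T\ge\t:=\min\{\t_{r_2=n/5},\,\t_{r_1=4n/5}\}$, and by the symmetry the two hitting times are identically distributed; so it suffices to control $\t_{r_2=n/5}$.

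The core of the argument is to integrate out the fast dynamics of the centre $c_2$ and read off a one-dimensional birth-and-death description of $r_2$. First I would freeze $S_1$ at ``all red, $c_1=R$'' and study the \emph{embedded} walk obtained by observing $r_2$ only at steps where it changes. An increase of $r_2$ needs $c_2=R$ to be selected and to push onto one of its $b_2$ blue leaves; a decrease needs $c_2=B$ to be selected and to push onto one of its $r_2$ red leaves. Solving the two-state hitting problem for $c_2$ at fixed $r_2=i$ gives, after elementary algebra, that the probability the next change of $r_2$ is $+1$ equals $\ooi\, i/n$, so the embedded down/up ratio is $\ooi(n-i)/i$. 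For $1\le i\le n/5$ this ratio is at least $4$, uniformly, and the $\ooi$ corrections cost only a constant factor in the product below. I would then invoke the hitting-time identity \eqref{ETM} in its gambler's-ruin form: an excursion from $0$ of a walk with down/up ratio $\ge 4$ reaches level $n/5$ with probability $O(4^{-n/5})=O(2^{-2n/5})$, and in $t_0$ steps there are at most $t_0$ such excursions, so $\Pr[\t_{r_2=n/5}\le t_0]\le t_0\cdot O(2^{-2n/5})$.

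Taking $t_0$ to be a small constant multiple of $2^{2n/5}$ makes this probability at most $\tfrac14$; the symmetric bound for $\t_{r_1=4n/5}$ and a union bound give $\Pr[\t\le t_0]\le\tfrac12$, whence $\E T\ge\E\t\ge t_0\,\Pr[\t>t_0]\ge \tfrac12 t_0=\Om(2^{2n/5})$. I would stress that the reflecting barrier at $r_2=0$ is essential and is exactly why the correct conclusion is a statement about the expected hitting \emph{time} (via \eqref{ETM}) rather than about the probability of ever crossing: a naive multiplicative supermartingale $\b^{r_2}$ is a supermartingale on $r_2\ge1$ but fails at $r_2=0$, so it cannot be used globally.

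The step I expect to be the main obstacle is removing the ``freeze $S_1$'' assumption, i.e.\ justifying that the embedded up-probability of $r_2$ is at most $\ooi\, i/n$ \emph{uniformly over whatever $S_1$ and $c_1$ are doing} while $r_1>4n/5$ and $r_2<n/5$. The up-rate of $r_2$ is insensitive to $S_1$, but the down-rate is perturbed by the central edge: when $c_1=R$ it supplies an extra vertex that flips $c_2$ back to $R$ (slightly lowering the down-rate), and when $c_1=B$ the edge $c_1c_2$ becomes discordant and must be tracked. I would handle this by a monotone coupling showing these are $O(1/n)$ relative corrections that only help keep $r_2$ low, never raising the up-probability above $\ooi\, i/n$, together with the observation that before time $\t$ we always have $r_1>4n/5$, so $S_1$ genuinely remains in its red basin; pushes of $c_2=B$ onto $c_1$ are precisely the rare events that would initiate the competing all-blue route, already accounted for by the symmetric bound on $\t_{r_1=4n/5}$.
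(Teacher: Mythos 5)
Your proposal is correct and takes essentially the same route as the paper's proof: reduce to the embedded $\pm1$ walk on the recoloured-leaf count of one star, show the up-probability is at most $\nicefrac15$ up to level roughly $n/5$, apply the gambler's-ruin excursion bound ($O(4^{-n/5})$ per excursion), and union-bound over at most $t_0$ excursions and over the two symmetric stars to get $\Om(2^{2n/5})$. The one step you flag as the main obstacle---uniformity of the up-probability bound over whatever $c_1$, $c_2$ and the other star are doing---is exactly where the paper works, and it needs no monotone coupling: it bounds each intermediate centre recolouring by its worst case over the other centre's colour and sums the resulting geometric series, getting $\Pr(r_{t+1}=r+1\mid r_t=r)\le \frac{(r+2)(n-r+2)}{(n+3)(n-r+1)}\le\frac{r+3}{n+3}\le\nicefrac15$ for $r\le (n-12)/5$, which is the uniform bound your argument requires.
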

\begin{proof}
We will assume that the initial configuration for the process has $X_0(v)=B$ $(v\in S_1)$, and  $X_0(v)=R$ $(v\in S_2)$.
Then, for convergence to occur, we must have either $X(v)=R$ $(\forall v\in S_1)$, or $X(v)=B$ $(\forall v\in S_2)$. Without loss of generality, we suppose $S_1$ that must be recoloured $R$, and temporarily restrict attention to $S_1$.

Let $r_t=|\{v\in S_1\setminus c_1:X_t(v)=R\}$ be the number of leaves in $S_1$ which are coloured $R$, and hence $(n-r_t)$ leaves are coloured $B$. We make no assumption about $X_t(c_1)$ or $X_t(c_2)$. See~Fig.~\ref{double:fig02}.

\begin{figure}[H]
\centerline{%
    {\begin{tikzpicture}
    \foreach \x in {4,...,6}
    {\draw (20*\x-90:2.5cm) node[bnode] (\x1) {};}
    \foreach \x in {7,...,14}
    {\draw (20*\x-90:2.5cm) node[rnode] (\x1) {};}
    \draw (0,0) node[circle,draw,inner sep=0pt,fill=lightgray,minimum size=2.5mm] (c1) {} ;
    \draw (-0.35,-0.35) node {\large$c_1$};
    \foreach \x in {4,...,14}
    {\draw (c1)--(\x1) ;}
    \draw (10:3cm) node (r){{\large$r$}} ;
    \draw[line width=1,->] (r)edge(28:3cm) (r)edge(-8:3cm) ;
    \draw (c1)--(0,-1.5) ;
  \end{tikzpicture}}}
  \caption{$S_1$ with $r$ leaves coloured $R$}\label{double:fig02}
\end{figure}
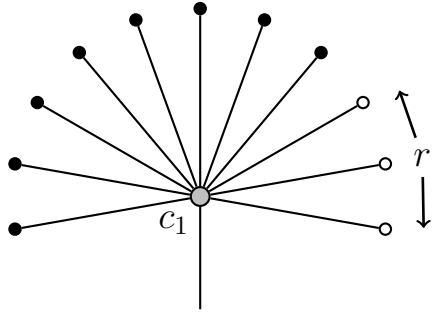

Now, if $r_{t-1}=r$, at step $t$  either $r_{t}\gets r+1$, $r_{t}\gets r-1$, $c_1$ changes colour, or the step involves $S_2$. We discard all steps which involve $S_2$ or $c_1$, and consider the time $t$ as changing only when either $r_{t+1}\gets r_t+1$ or $r_{t+1}\gets r_t-1$. Thus $t$ is a lower bound on the duration of the process.

We will upper bound $\Pr(r_{t+1}= r+1)$, when $r_t=r$. This event occurs only when $c_1$ is chosen, and will be maximised when $X_t(c_1)=R$, since otherwise $c_1$ must first change colour. It is also maximised when $X_t(c_2)=R$, since then $c_1c_2$ cannot be chosen as a discordant edge. However, $c_1$ may be recoloured $B,R$ any number of times, $k$ say, between $t$ and $t+1$. The probability that $c_1$ is recoloured $B$ is at most $(n-r+1)/(n-r+2)$, when $c_2$ is coloured $B$.
Subsequent to this,  the probability that $c_1$ is recoloured $R$ is at most $(r+1)/(r+2)$, when $c_2$ is coloured $R$.
\begin{align*}
\Pr(r_{t+1}= r+1 \mid r_t=r)\ &\leq\ \frac{1}{n-r+1}\sum_{k=0}^\infty\left(\frac{r+1}{r+2}\,\frac{n-r+1}{n-r+2}\right)^k\\
&=\ \frac{1}{n-r+1}\,\left(1-\dfrac{r+1}{r+2}\,\dfrac{n-r+1}{n-r+2}\right)^{-1}\\
&=\ \frac{(r+2)(n-r+2)}{(n+3)(n-r+1)}\\
&\leq\ \frac{r+3}{n+3},\mbox{\ \ if }r\leq (n-1)/2.
\end{align*}
Since the only alternative is that $r_{t+1}=r-1$, when $r\leq (n-1)/2$, we also have
\[ \Pr(r_{t+1}= r-1 \mid r_t=r)\ =\ 1-\Pr(r_{t+1}= r+1)\ \geq\ 1-\frac{r+3}{n+3}\ =\ \frac{n-r}{n+3}.\]
Now $\Pr(r_{t+1}= r+1 \mid r_t=r)\leq (r+3)/(n+3)\leq \nicefrac15$ if $r\leq (n-12)/5$. Let $\nu=\lfloor(n-12)/5\rfloor$. Thus, in the range $0\leq r_t\leq \nu$, the process $r_t$ is dominated by a random walk $Z_t$ with $\Pr(Z_{t+1}=r+1 \mid Z_t=r)=\nicefrac15$, $\Pr(Z_{t+1}=r-1\mid Z_t=r)=\nicefrac45$.
Let a \emph{trial} of this process be the sequence of $T$ steps, starting with $Z_0=1$, until either of the events $E_0:Z_T=0$ or $E_\nu:Z_T=\nu$ occurs. From~\cite[p.314]{Feller}, we have
\[ \Pr(E_\nu)\ = \ \frac{3}{4^\nu-1}\ \leq\ 4^{1-\nu}\quad \mbox{for }\nu>1. \]
Let $E^{1,k}_\nu$ be the event that $E_\nu$ ever occurs in $k$ trials.
Thus $\Pr(E^{1,k}_\nu)\leq k4^{1-\nu}=4k/4^\nu$. The corresponding event $E^{2,k}_\nu$ in $S_2$ is that $n-r_t=\nu$ occurs in $k$ trials, and so similarly $\Pr(E^{2,k}_\nu)\leq k4^{1-\nu}$. Let $E^k_\nu=E^{1,k}_\nu\vee E^{2,k}_\nu$, so
\[ \Pr(E^{k}_\nu)\,=\,\Pr(E^{1,k}_\nu\vee E^{2,k}_\nu)\,\leq\,\Pr(E^{1,k}_\nu) + \Pr(E^{2,k}_\nu)\,\leq\, 8k/4^{\nu}\,=\,k/2^{2\nu-3}.\]
Clearly convergence requires $E^{k}_\nu$ to have occurred. However, if $k\leq 2^{2(\nu-5)}$, $E^{k}_\nu$ occurs with probability at most $\nicefrac14$.  Thus we need at least $\Omega(4^\nu)=\Omega(2^{2n/5})$ trials before there is any appreciable probability of convergence. Hence $\Omega(2^{2n/5})$ is a lower bound on the time for convergence with high probability.
\end{proof}
For a double star $S_N^*$ on $N=2n+2$ vertices, it follows that for the push process $\E T =\Om(2^{N/5})$, as stated in Theorem \ref{Sn}.

\subsection*{Pull voting on the double star}
\begin{lemma}
Let $T$ be the expected time to complete discordant pull voting on the double star
of $2n+2$ vertices. Then for any starting configuration $\E T=O(n^4)$.
\end{lemma}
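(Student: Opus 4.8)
The plan is to collapse the whole configuration onto a three‑parameter Markov chain and then bound its hitting time of the consensus state by a single ``favourable trajectory'' estimate that is uniform over all states.

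\textbf{Reduction.} I would track only $d_1$, the number of leaves of $S_1$ whose colour differs from $X(c_1)$, the analogous count $d_2$ for $S_2$, and the bit $s=1$ if $X(c_1)\ne X(c_2)$ and $s=0$ otherwise. In pull voting the chosen update vertex copies a uniformly random \emph{discordant} neighbour. A discordant leaf has only its centre as a neighbour, of the opposite colour, so it becomes concordant and the relevant $d_i$ drops by one; hence leaf moves are monotone. Every discordant neighbour of a centre has the colour opposite to that centre, so whenever a centre is the update vertex it necessarily flips, which sends $d_i\mapsto n-d_i$ and toggles $s$. Thus $(d_1,d_2,s)$ evolves as a Markov chain whose unique absorbing state is $(0,0,0)$ (consensus), and at any time the number of discordant vertices is $D=d_1+d_2+\mathbf 1\{d_1\ge1\text{ or }s=1\}+\mathbf 1\{d_2\ge1\text{ or }s=1\}\le 2n+2$.

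\textbf{A favourable trajectory of polynomial probability.} Fix an arbitrary state and pick a target monochromatic colour (the common centre colour if $s=0$, else either colour). I would reach consensus in two phases. \emph{(A)} If $s=1$, one step in which the centre of the ``wrong'' colour is selected flips it to the target colour; this happens with probability $1/D\ge 1/(2n+2)$, after which $s=0$, both centres carry the target colour, and the number of discordant leaves is at most $2n$. \emph{(B)} With $s=0$ and both centres of the target colour, a centre is discordant iff its star still contains a discordant leaf, so when $k$ discordant leaves remain we have $D\le k+2$ and a leaf is chosen with probability at least $k/(k+2)$. The event that all $k_0\le 2n$ remaining discordant leaves are resolved before any centre is ever selected therefore has probability at least
\[\prod_{k=1}^{k_0}\frac{k}{k+2}=\frac{2}{(k_0+1)(k_0+2)}=\Om(1/n^2),\]
and this phase lasts exactly $k_0\le 2n$ steps and ends in consensus. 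Combining the phases, from \emph{every} state the process reaches consensus within $L=2n+1$ steps with probability at least $p=\Om(1/n^3)$.

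\textbf{Conclusion.} Splitting time into consecutive blocks of $L$ steps, the uniform bound gives failure probability at most $(1-p)^k$ after $k$ blocks, so $\E T\le L/p=O(n)\cdot O(n^3)=O(n^4)$, as required.

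The hard part is precisely the non‑monotonicity introduced by centre flips: a single flip acts as a reflection $d_i\mapsto n-d_i$ that un‑aligns an entire star, and because both stars share one update clock a naive monotone potential is unavailable. The crux of the argument is therefore the state‑uniform lower bound on the favourable event, and in particular the telescoping product in phase (B), which is what converts the seemingly dangerous ``final alignment'' (where selecting a centre becomes increasingly likely) into a $\Om(1/n^2)$ success probability rather than an exponentially small one.
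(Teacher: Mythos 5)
Your proof is correct and follows essentially the same route as the paper: you first get the central edge monochromatic (the paper bounds the per-step probability of this by $1/(n+1)$, you by $1/(2n+2)$ via a one-step centre flip), and then lower-bound the probability that a single uninterrupted run of leaf pulls finishes everything --- your telescoping product $\prod_{k=1}^{k_0} k/(k+2) = 2/\bigl((k_0+1)(k_0+2)\bigr) = \Omega(1/n^2)$ is exactly the paper's $\rho(b,b \mid R) = 2/\bigl((b+1)(b+2)\bigr) \ge 1/n^2$. The only difference is bookkeeping: the paper multiplies (expected wait for a monochromatic centre edge) $\times$ (expected number of runs) $\times$ (maximum run length), whereas you package the same estimates as a state-uniform restart argument over blocks of length $2n+1$ with success probability $\Omega(1/n^3)$; both give $\E T = O(n^4)$.
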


\begin{proof}
Our proof mimics that for pull voting on the star graph.
If the centers $c_1,c_2$ are the same colour (say red)
we call the central edge monochromatic.
If the central vertices are both red (e.g.),  a run is
a sequence of steps in which a blue leaf vertex is chosen at each step and pulls the red colour from one of the central vertices.

Let $r_1,b_1$ be the red and blue leaves
in $S_1$ (resp. $r_2, b_2$ in $S_2$). Let $b_1+b_2=b$.
Let $\r(b,k \mid R)$ be the probability of a run of length at least $k\ge 0$ given the central vertices are red. The probability  that a  central vertex is recoloured at the next step is $\r(b,0 \mid R)=2/(b+2)$. The required probabilities are
\[
\r(b,k \mid R)= \left\{
\begin{array}{ll}
\frac{b}{b+2}&k=1\\
&\\
\frac{b(b-1)}{(b+2)(b+1)}&k=2\\
&\\
\frac{(b-k+2)(b-k+1)}{(b+2)(b+1)}& k=3,...,b-1\\
&\\
\frac{2}{(b+1)(b+2)} & k=b
\end{array}
\right.
\]
Before cancelation of terms, for $k \ge 3$ the  expression  for $\r(b,k \mid R)$ is
\[
\frac{b}{b+2}\frac{b-1}{b+1} \frac{b-2}{b} \cdots \qquad \cdots\frac{b-(k-3)}{b-(k-3)+2}
\frac{b-(k-2)}{b-(k-2)+2}\frac{b-(k-1)}{b-(k-1)+2}.
\]
The cases $k=1,2$ are  given by the first two terms of this expression.

If the central edge monochromatic.
then the probability $P$ to
finish voting without recoloring either of $c_1, c_2$ is  $P=\r(b,b\mid R) \ge 1/n^2$. Let $\mu'$ be an upper bound on the expected number of runs required for an exit (i.e. for the entire
colouring to be monochromatic). Then $\mu'\le  1/P=n^2$.

If the central edge is not monochromatic,  e.g. $c_1$ is red and $c_2$ is blue,
let the probability of becoming monochromatic in a given step be $\f(r_1,b_1,r_2,b_2)$.
Thus
\[
\f(r_1,b_1,r_2,b_2) \ge \min\left\{ \frac{2}{b_1+r_2+2}, \frac{2}{r_1+b_2+2}\right\}
  \ge \frac{2}{2n+2}=\frac{1}{n+1}.
\]
Let $\mu$ be an upper bound on the expected wait for the central edge to become monochromatic. Then $\mu \le n+1$.

The number of steps in any  run is
at most $s=2n+1$. Thus for the pull process
\[
\E T \le \mu \mu' s= (n+1)\;n^2\;(2n+1) = O(n^4).
\]
\end{proof}


\section{Voting on the barbell graph}
The barbell or dumbbell graph of $n$ vertices, $B_{2n}$, is given by two disjoint cliques $S_1$ and $S_2$  of size $n$ joined by a single edge $e$. It has $N=2n$ vertices and $2 {n\choose 2} +1$ edges.

\subsection*{Push voting on the barbell}
We start with the following configuration: all vertices in $S_1$ are red, and all vertices in $S_2$ are blue.
Let $T$ the first time when the whole of $S_1$ is blue (or $S_2$ is red). Clearly $T$ is less than (or equal to) the time to reach consensus. For simplicity, we just look at $S_1$ and assume the final colour of $S_1$ (and $S_2$) is blue. Suppose that $N_t$ is the number of blue vertices in $S_1$, where initially $N_t = 0$.
Let $M_t$ be the number of discordant vertices, where $M_0=2$. When $1 \le N_t \leq n/5-9$ then $M_t \ge n$, and
\begin{eqnarray*}
\Pr(N_{t+1} = N_t+1| N_t) \leq (N_t+1)/M_t \leq (N_t+1)/n \le 1/5,\\
\Pr(N_{t+1} = N_t-1|N_t) = (n-N_t)/M_t \ge 2/5.
\end{eqnarray*}
In the regime $1\leq N_t \leq n/5-9$,  $N_{t}$ is dominated by a process $N'_t$ with
\begin{eqnarray}
\Pr(N'_{t+1} = N'_t+1| N'_t) = 1/5, \nonumber\\
\Pr(N'_{t+1} = N'_t-1|N'_t) = 2/5, \nonumber\\
\Pr(N'_{t+1} = N'_t|N'_t)= 2/5. \label{loopy}
\end{eqnarray}
Let $Z$ be $N'_t$ observed when $N'$ changes, and thus we ignore the loop steps given
by \eqref{loopy}. In which case, the probability $p$ that $Z$ increases by one is $p=1/3$, and the probability $q$ that $Z$ decreases by one is $q=2/3$.
We now follow the
analysis for push voting on the double star.
Let a \emph{trial} of this process be the sequence of $T$ steps, starting with $Z_0=1$, until either of the events $E_0:Z_T=0$ or $E_\nu:Z_T=\nu$ occurs. From~\cite[p.314]{Feller}, we have
\[
\Pr(E_\nu)\ = \ \frac{1}{2^\nu-1}\ \leq\ 2^{1-\nu}\quad \mbox{for }\nu>1.
\]
From now on, the same argument used for the double star works here. We just repeat the conclusion that $\E T = \Omega(2^\nu)=\Omega(2^{n/5})=\Omega(2^{N/10})$, where $N=2n$ is the total number of vertices.

\ignore{
Assuming $Z_0 = N_0 = 1$, trial is is a succession of $T$ steps such that $Z_T = \nu = \lfloor n/5-9 \rfloor$ or $Z_T = 0$. When the trial finishes, it returns to $Z_0 = 1$, just for convenience. Let $E$ the event that $Z_T = \nu$, then from Feller we have that $\Pr(E) \leq 4^{1-\nu}$. Let $E_k$ be the event that after $k$ trial at least one of them finish with $Z_T = \nu$, then $\Pr(E_k) \leq k4^{1-\nu}$. Since we are running two processes at the same time (The process in $S_1$ and $S_2$) we denote by $E_k^{1,2}$ the event that at least one of the two process satisfied the event $E_k$, then $\Pr(E_k^{1,2}) \leq 2k4^{1-\nu}$. If $k\leq 2^{2(\nu-5)}$ then $\Pr(E_k^{1,2}) \leq 1/4$, thus with probability at most $1/4$, hence with probability at least $3/4$ we need at least $\Theta(4^\nu)$ trials in both processes to at least one of them to get $Z_T = \nu$, then the expected number of trial to finish at least one of them is $\Theta(4^\nu)$, giving the desired result.
}

\subsection*{Pull voting on the barbell}

We suppose we have reached a configuration in which all vertices except one are red. Suppose the unique blue vertex  is in $S_1$. We modify our process so that the system reaches consensus faster. To do that, in each round we only select vertices in $S_1$, and assume the final colour will be red. If the final colour would be blue, then we must also recolor all of $S_2$.
Even if the vertex $c_1$ of the bridge edge $e=(c_1, c_2)$ is blue, the interaction between $S_1$ and $S_2$ does not affect the outcome. If $S_1$ is not in consensus then each vertex in $S_1$ has at least one discordant neighbour in $S_1$, so the (red)  opinions in $S_2$ will not affect the outcome.

We use a result from the proof of Theorem \ref{Kn} for $K_n$ as given in Section \ref{KN}. Inequality \eqref{Knfrom1} shows that the expected time for pull voting to reach consensus in $K_n$, when all but one vertex is red is $\Omega(2^n)$. So, the time to finish in our modified process is $\Omega(2^n)=\Omega(2^{N/2})$.

\ignore{
\section{Discordant voting: Simulation results}\label{Sim}
For the cases of discordant voting on the cycle and star graphs,
simulations were made for the push, pull and oblivious protocols and for a range of vertex set sizes ($n$). The initial vertex coloring was random.
The averages are based on a minimum of 10 simulations  for each case ($n$,  protocol, graph). The maximum graph size studied was $n=300$ (star), and $n=350$ (cylcle).
 We included the oblivious protocol  in our experiments as a control case.
 For the oblivious protocol, and a random initial colouring, the average time to consensus should be about $\E T \sim n^2/4$ for both graphs.

 The figures speak for themselves (see Figure \ref{star-sim}). Experimentally, for a cycle of size $n$, there is little difference between the protocols in the average time taken to reach consensus. However, for a star graph, the average time to reach consensus is longer for the push protocol, even for $n=150$.

\begin{figure}[H]
\label{star-sim}
\begin{center}
\includegraphics[width=220pt]{CyclefigureV2.pdf}
\includegraphics[width=220pt]{Starfigure.pdf}
\end{center}
\caption{Simulation results.
 Push (blue), Pull (red), Oblivious (green)}
\end{figure}

\newpage
}

\end{document}